\newcommand{\suit}[1]{\left( #1\right)}
\newcommand{\set}[1]{\left\{ #1\right\}}
\newcommand{\abs}[1]{\left| #1\right|}
\newcommand{\floor}[1]{\lfloor #1 \rfloor}
\newcommand{\mI}{\mathcal{I}}
\newcommand{\bE}{\mathbb{E}}
\newcommand{\mbX}{\boldsymbol{X}}
\newcommand{\hattau}{\widehat{\tau}}
\newcommand{\hatI}{\widehat{I}}
\theoremstyle{plain}
\newtheorem{lemma}{Lemma}
\newtheorem{theorem}{Theorem}
\newtheorem{remark}{Remark}
\title{Clustering Tails in High Dimension}
\author{
    \text{Liujun Chen}\thanks{Faculty of Business for Science and Technology, School of Management, University of Science and Technology of China,   ljchen22@ustc.edu.cn}, \
    \text{Marco Oesting}\thanks{Stuttgart Center for Simulation Science and Institute of Stochastics and Applications, University of Stuttgart, marco.oesting@mathematik.uni-stuttgart.de}, \
   \text{Chen Zhou}\thanks{ Econometric Institute, Erasmus University Rotterdam, zhou@ese.eur.nl}
 }
\begin{document}

\maketitle

\begin{abstract}

One potential solution to combat the scarcity of tail observations in extreme value analysis is to integrate information from multiple datasets sharing similar tail properties, for instance, a common extreme value index. In other words, for a multivariate dataset, we intend to group dimensions into clusters first, before applying any pooling techniques. This paper addresses the clustering problem for a high dimensional dataset, according to their extreme value indices.

We propose an iterative clustering procedure that sequentially partitions the variables into groups, ordered from the heaviest-tailed to the lightest-tailed distributions. At each step, our method identifies and extracts a group of variables that share the highest extreme value index among the remaining ones. This approach differs fundamentally from conventional clustering methods such as using pre-estimated extreme value indices in a two-step clustering method.

We show the consistency property of the proposed algorithm and demonstrate its finite-sample performance using a simulation study and a real data application.

\end{abstract}

{\it Keywords: extreme value index, heavy tails, high-dimensional extremes
}

\sloppypar

\section{Introduction}
Integrating information from multiple datasets has emerged as a powerful strategy for improving statistical efficiency and predictive accuracy. For example, combining air pollution data across cities \citep{liang2016pm2} or pooling electronic health records from multiple hospitals \citep{cai2022individual} can yield more comprehensive insights than analyzing each source in isolation.  This principle underpins modern methodologies such as distributed learning \citep{battey2018distributed,volgushev2019distributed}, transfer learning \citep{li2022transfer,tian2023transfer}, and multi-task learning \citep{li2025multi,xu2025representation}, all of which aim to leverage shared information across sources or tasks to enhance estimation accuracy. 

One area where data integration is particularly valuable is extreme value analysis, which focuses on understanding the behavior of rare but impactful events \citep{beirlant2004statistics,haan2006extreme,resnick2008extreme}. Consider the Peaks-Over-Threshold (POT) approach in extreme value statistics \citep{davison1990models}. This approach models exceedances over a high threshold by the generalized Pareto distribution. Its statistical counterpart uses selected observations in the tail area, typically a few high order statistics. The total number of selected observations in estimation is often much lower than the sample size, leading to high estimation variance. Therefore, extreme value statistics can benefit largely from integrating tail information across sources, which enhances the stability and reliability of statistical inference for extreme events. Recent studies have successfully explored methods for combining data from multiple sources to improve the estimation of the extreme value index \citep{chen2022distributed,einmahl2022spatial, daouia2024optimal,chen2025distributed}.

In this paper, we provide a novel statistical procedure to cluster dimensions into groups based on their tail behaviour. Existing approaches for the combination of tail information across multiple datasets often rely on the maintained assumption that all datasets share similar tail properties. For instance, in the aforementioned studies to estimate the extreme value index, it is necessary to assume that multiple datasets share a common extreme value index. However, this assumption is often violated in practice, particularly when the data come from heterogeneous sources. In such cases, naively pooling the data can lead to negative transfer \citep{zhang2022survey}, resulting in biased estimation and unreliable inference. This brings us to a fundamental question: how can we group datasets according to their extreme value indices, so that only those with sufficiently similar tail behavior are combined for joint analysis?  To address this, we develop a clustering method that partitions random variables into groups with common extreme value indices. Our approach ensures the validity of tail integration, providing a foundation for extreme value analysis in heterogeneous settings.

Consider independent and identically distributed (i.i.d.)  observations $\mbX_1 = (X_1^{(1)}, \dots, X_1^{(p)}), \dots, \mbX_n = (X_n^{(1)}, \dots, X_n^{(p)})$  drawn from a multivariate distribution function $F$ with marginals $F_1, \dots, F_p$. We assume that the distribution functions $F_j$ are in the max-domain of attraction of  generalized extreme value
distributions with extreme value indices $\gamma_j>0$, i.e., $F_j\sim \mathcal{D}(G_{\gamma_j})$, $j=1,\dots, p$. Mathematically,  
\begin{equation}\label{eq:DoA}
    \lim_{t\to\infty} \frac{1-F_j(tx)}{1-F_j(t)} = x^{-1/\gamma_j}, \quad j=1,\dots,p.
\end{equation}

We further assume that the indices $\set{1,\dots,p}$  can be partitioned into $g$ disjoint groups $\boldsymbol{\tau} = \set{\tau_1, \dots, \tau_g}$ such that $\gamma_i = \gamma_j$ if and only if $i$ and $j$ belong to the same group $\tau_\ell$ for some $\ell \in {1, \dots, g}$. Without loss of generality, we assume that the distinct extreme value indices $\gamma^{(1)}, \dots, \gamma^{(g)}$ corresponding to groups $\tau_1, \dots, \tau_g$ are ordered in decreasing 
order, i.e., $\gamma^{(1)}> \cdots > \gamma^{(g)}>0$.

The goal of this paper is to cluster the $p$ random variables according to their extreme value indices. In particular, the method we propose handles the high dimensional situation where $p\to\infty$ as $n\to\infty$.

A natural baseline approach is to first estimate the extreme value indices $\widehat{\gamma}_1, \dots, \widehat{\gamma}_p$ using standard estimators, and then apply an off-the-shelf clustering algorithm such as the $k$-means method to group the variables based on these estimates \citep{de2023similarity, cao2024tail, wang2025clustering}. However, this two-step procedure has notable limitations. First, accurately estimating the extreme value index for each variable can be highly challenging, especially when the number of extreme observations per variable is limited. Estimation noise may lead to unstable or misleading clustering outcomes. Secondly, estimating the (common) extreme value index is a downstream task after solving the clustering problem. It turns to be a circular problem, had we relied on inaccurate marginal estimations in the clustering step. Thirdly, many standard clustering algorithms such as the $k$-means method require the number of clusters $g$ to be specified in advance, which is often unavailable or difficult to determine in practice.

Our proposed clustering algorithm addresses all these challenges. The algorithm is an iterative clustering procedure that sequentially partitions the $p$ variables into groups, ordered from the heaviest-tailed (highest extreme value index) to the lightest-tailed (lowest extreme value index). At each step, our method identifies and extracts a group of variables that share the highest extreme value index among the remaining ones. 
This approach differs fundamentally from the aforementioned baseline clustering methods in several important aspects. First, 
our procedure does not rely on any assumptions regarding the dependence structure among the variables. The variables can have potential tail dependence among them. Second, it does not require prior knowledge of the number of clusters $g$, but instead determines the grouping structure adaptively from the data. Third, 
  it is readily applicable in high-dimensional settings, where both the number of variables and the scarcity of extreme observations pose significant challenges to conventional approaches. Lastly, it is computationally efficient, avoiding any time-consuming optimization steps and relying entirely on simple rank-based comparisons and thresholding. 

Our work contributes to the growing literature on high-dimensional extreme value analysis. Recent studies in this area have predominantly focused on modeling the tail dependence structure of high-dimensional random vectors \citep{engelke2021learning, engelke2022structure, lederer2023extremes, wan2023graphical}. These approaches often begin by standardizing the marginal distributions and then impose sparsity assumptions on the tail dependence. In contrast, our work shifts the focus to the marginal tail behavior of high-dimensional random vectors. Notably, our approach does not require any assumptions on the sparsity or structure of the tail dependence, thereby broadening the applicability of extreme value methods in high dimensions.

The rest of the paper is organized as follows. In Section 2, we introduce the proposed algorithm and establish its theoretical properties. Section 3 presents a simulation study to examine the finite-sample performance of our estimator. Section 4 demonstrates the practical utility of our method through a real data application. Proofs are collected in Section 5 and partly deferred to the Appendix,  along with some additional simulation results.

Throughout the paper, $\lfloor x \rfloor$ denotes the largest integer less than or equal to $x$.  For a set $A$, $|A|$ denotes the cardinality of $A$. 

\section{Methodology}
\subsection{Number of groups $g$ is known}
Our clustering algorithm follows a two-stage procedure. In the first stage, we standardize the marginal tails to ensure that extreme value indices are comparable across all random variables. In the second stage, we iteratively group the variables based on similarities in their tail behavior.

To compare extreme value indices across dimensions, we first self-scale the observations.
For each random variable $X^{(j)}$, $j=1,\dots,p$, we define the scaled observations as
$$
Y_{i}^{(j)} = \frac{X_{i}^{(j)}}{X_{n-k^*:n}^{(j)}}, \quad\text{for} \quad  i = 1,\dots,n, 
$$
where $X_{1:n}^{(j)} \le \cdots \le X_{n:n}^{(j)}$ are the order statistics of $X_{1}^{(j)},\dots, X_{n}^{(j)}$.  Here, $k^*$ is an intermediate sequence satisfying $k^*\to \infty$ and $k^*/n\to 0$ as $n\to\infty$. To ensure that the normalization is well-defined, $k^*$ should be chosen such that $\min_{1\le j\le p}X_{n-k^*:n}^{(j)}>0$.

Under condition \eqref{eq:DoA}, the exceedances of $X^{(j)}$ over the threshold $X_{n-k^*,n}^{(j)}$ approximately follow a generalized Pareto distribution. With this normalization, we obtain the following approximation for the high order statistics of $Y^{(j)}$, 
$$
Y_{n-k^*x,n}^{(j)} = \frac{X_{n-k^*x,n}^{(j)}}{X_{n-k^*,n}^{(j)}}=x^{\gamma_j}\suit{1+o_P(1)},  \quad \text{as} \ n\to\infty, 
$$
where  $Y_{1,n}^{(j)}\le \cdots\le Y_{n,n}^{(j)}$ denote the order statistics of $Y_1^{(j)},\dots, Y_n^{(j)}$, $j=1,\dots,p$. 
This transformation effectively eliminates the influence of the scale difference, enabling the analysis to focus exclusively on the extreme value index $\gamma_j$. Such self-scaling is also used in the construction of estimators in extreme value statistics, such as  
the Hill estimator \citep{hill1975simple}.

The second stage involves an iterative procedure that sequentially identifies groups of variables in order of decreasing tail heaviness. 
At the core of this stage is a comparison between the marginal high quantile and the high quantile of the pooled observations.
  The key intuition is as follows. Consider another intermediate sequence $k$ satisfying $k = k(n)\to\infty$, $k/k^*\to 0$ and $k/n \to 0$ as $n\to\infty$. If all dimensions share the same extreme value index, then the $(1-k/n)-$quantile of the self-scaled observations at each dimension should be comparable to the quantile at the same probability level  when all self-scaled observations are pooled together. Note that this property is irregardless the (tail) dependence structure across the dimensions. When different marginals have different extreme value indices, this property holds for the cluster sharing the common highest extreme value index, i.e., the heaviest tail. By contrast, consider another dimension with a lower extreme value index. Then the $(1-k/n)-$quantile of the self-scaled observations at this dimension will be at a lower order than the quantile obtained from the pooled observations. This is the key intuition for selecting the cluster with the common highest extreme value index. To formalize the idea, consider any fixed constant $\beta\in (0,1)$. We compare the $(1-\beta k/n)-$quantile among the self-scaled observations at each dimension, i.e., $Y^{(j)}_{n-\floor{\beta k},n}$, with the threshold, which is the $(1-k/n)-$quantile of the pooled observations. If and only if dimension $j$ belongs to the cluster sharing the highest extreme value index,  $Y^{(j)}_{n-\floor{\beta k},n}$ would exceed the threshold. This leads to the following algorithm.

We initialize the candidate set as $\widehat{I}_1 = \set{1,\dots, p}$. At each iteration $\ell=1,\dots, g-1$, 
a variable $j\in \widehat{I}_{\ell}$ is assigned to the $\ell$-th group $\widehat{\tau}_{\ell}$ if its quantile $Y^{(j)}_{n-\floor{\beta k},n}$ exceeds a dynamically updated threshold $u_{\ell}$, which is the $(1-k/n)-$quantile of the pooled observations $\set{Y_i^{(j)}, i=1,\dots,n, j\in \widehat{I}_{\ell}}$. 
After each grouping step, the candidate set is updated as $\hatI_{\ell+1} = \hatI_{\ell} \backslash \hattau_{\ell}$, and the algorithm proceeds to the next iteration.
The algorithm terminates after $g-1$ iterations, with the final group $\widehat{\tau}_g$ consisting of all the remaining variables in $\widehat{I}_g$. This sequential approach ensures that groups are identified in order of their tail behavior, from the heaviest to the lightest tails.


\begin{algorithm}[htb]
    \SetAlgoLined
    \caption{Tail Clustering with known $g$.}
    \label{algorithm:known:g}
    \KwIn{Data $\mbX_1,\dots,\mbX_n$,  partition size $g$, parameters $\beta \in (0,1), k, k^*\in \set{1,\dots,n}$}
    Rescale the observations
    $
    Y_{i}^{(j)} = \frac{X_{i}^{(j)}}{X_{n-k^*:n}^{(j)}} 
    $ for $i = 1,\dots,n$, $j = 1,\dots,p$\; 
    Set $\hatI_1=  \set{1,\dots,p}$\;
    \For{$\ell =1,\dots, g-1$ }{

            Define $u_{\ell}$ as the $(k\cdot |\hatI_{\ell}|)$th upper order statistic of the $(n\cdot  |\hatI_{\ell}|)$- dimensional vector $(Y_{i}^{(j)})_{i=1,\dots,n, \ j\in  \hatI_{\ell}}$ \;
           Set $\widehat{\tau}_{\ell}  = \set{j: j\in \hatI_{\ell}, \ Y_{n-\floor{\beta k}:n}^{(j)}\ge u_{\ell} }$ \;
           Set $\hatI_{\ell+1} = \hatI_{\ell} \backslash \hattau_{\ell}$.
    }
    Set $\hattau_{g} = \hatI_g$\; 
    \KwOut{Partition  $\widehat{\boldsymbol{\tau}} =\set{\hattau_1,\dots,\hattau_g}$.} 
\end{algorithm}

The complete procedure is formalized in Algorithm \ref{algorithm:known:g}.  The algorithm takes as input the data $\mbX_1,\dots,\mbX_n$, the number of groups $g$ and parameters 
$\beta, k$ and $k^*$.  To establish the asymptotic properties of the proposed algorithm, we introduce the following assumptions.


\begin{enumerate}[label=(C\arabic*)]
    \setcounter{enumi}{0}
    \item \label{condition:uniform:convergence}   As $t\to\infty$, 
    $$
    \max_{1\le j\le p}  \sup_{x>1} \abs{\frac{1-F_j(tx)}{1-F_j(t)}x^{1/\gamma_j}-1}\to 0.
    $$
\end{enumerate}

Condition \ref{condition:uniform:convergence} is the  typical regular varying  condition in extreme value analysis. In the high dimensional setting, we require the regular varying conditions hold uniformly for $1\le j\le p$. Similar uniform assumptions have been made in the context of high-dimensional extremes; see, for example, \cite{chen2024high}.

We also impose the following conditions on the dimension $p = p(n)$, and the choices of the intermediate sequences $k$ and $k^*$. 
Denote
$$
\Delta=: \min_{1\le \ell \le g-1} \Delta^{(\ell)} \in (0,1),
$$
where 
$$
\Delta^{(\ell)} = \frac{\gamma^{(\ell)} - \gamma^{(\ell+1)}}{\gamma^{(\ell)}}, \ \ell = 1,\dots,g-1.
$$ 
\begin{enumerate}[label=(C\arabic*)]
    \setcounter{enumi}{1}
\item \label{condition:sequence} As $n\to\infty$, 
    \begin{align}
        k\to\infty,\ & k/k^*\to 0, \  k^*/n\to 0, \label{eq:condition:kstar}\\
        &\frac{p}{(k^*/k)^\Delta} \to 0. \label{eq:condition:p}
    \end{align}
 \end{enumerate}
 Condition \eqref{eq:condition:kstar} requires that $k^*$ be chosen larger than $k$, while condition \eqref{eq:condition:p} imposes an upper bound on the dimensionality $p$.

\begin{remark}
    One example for  $p, k^*, k$ satisfying condition \ref{condition:sequence} can be
    given as follows.   
Assume $p \asymp n^{a}$ with $0<a<\Delta$.  Then condition \ref{condition:sequence} holds with 
$k\asymp n^b, k^* \asymp n^c$ with $0<b< 1-a/\Delta$ and $b+a/\Delta<c<1$. 
\end{remark}

The following two theorems establishes the consistency of the clustering procedure in Algorithm 
\ref{algorithm:known:g}.
\begin{theorem}\label{theorem:prob}
Assume that conditions \ref{condition:uniform:convergence} and \ref{condition:sequence} hold.  Moreover, assume that number of groups $g$ is known and the partition $\widehat{\boldsymbol{\tau}}$ is obtained by Algorithm \ref{algorithm:known:g}.   For any $\beta\in (0,1)$, there exist some constants $C_1>0, C_2>0$ such that,
for sufficiently large $n$,  
$$
    \Pr(\hattau_{\ell} = \tau_{\ell}) \ge 1-C_1 p \exp(-C_2 k), 
$$
for each $\ell \in \set{1,\dots, g-1}$, provided that $\hatI_{\ell} = I_{\ell}$, where  $I_{\ell} = \set{j: \gamma_j\le \gamma^{(\ell)}}$.
\end{theorem}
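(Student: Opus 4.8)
The plan is to show that, conditionally on $\hatI_\ell = I_\ell$, the event $\{\hattau_\ell = \tau_\ell\}$ coincides with the event that, for every $j \in I_\ell$, the correct comparison between $Y^{(j)}_{n-\floor{\beta k}:n}$ and the pooled threshold $u_\ell$ holds. Since $I_\ell = \{j: \gamma_j \le \gamma^{(\ell)}\}$, the group with the heaviest tail among the remaining variables is $\tau_\ell$, corresponding to the common index $\gamma^{(\ell)}$. The strategy splits into two parts: (i) the pooled threshold $u_\ell$ is, with high probability, of the order $x^{\gamma^{(\ell)}}$ for an appropriate intermediate level $x$ between $k$ and $k^*$; and (ii) the marginal quantile $Y^{(j)}_{n-\floor{\beta k}:n}$ lies above $u_\ell$ precisely when $\gamma_j = \gamma^{(\ell)}$ and below it when $\gamma_j < \gamma^{(\ell)}$.

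First I would establish a uniform approximation for the order statistics of the self-scaled data. Using condition \ref{condition:uniform:convergence} together with a Rényi-type representation, for each fixed $\beta$ and each $j$, $Y^{(j)}_{n-\floor{\beta k}:n} = \beta^{-\gamma_j}(k^*/k)^{\gamma_j}(1+o_P(1))$, where the $o_P(1)$ term can be controlled by a concentration bound for the relevant uniform order statistic (e.g. a Chernoff/Bernstein bound giving exponential tails of the form $\exp(-ck)$, since the effective sample fraction driving this statistic is of order $k$). A union bound over the at most $p$ coordinates introduces the factor $p$. Next, for the pooled threshold, I would argue that $u_\ell$ is determined by the $k|I_\ell|$-th largest value among the $n|I_\ell|$ pooled self-scaled observations; because the coordinates in $\tau_\ell$ dominate the upper tail (they have the largest index $\gamma^{(\ell)}$, and after self-scaling their tails are asymptotically $x^{-1/\gamma^{(\ell)}}$ with a common scale), the pooled upper order statistic behaves like the corresponding order statistic restricted to $\tau_\ell$, up to a constant absorbed into the $(1+o_P(1))$. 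Crucially, this holds even under arbitrary tail dependence: the count of pooled exceedances above a level $t$ is a sum of marginal exceedance counts, whose expectation is dominated by the $\tau_\ell$ terms, and concentration of this sum around its mean again yields exponential-in-$k$ error bounds via a union bound over a discretization of the threshold (or via monotonicity, comparing at two deterministic levels). This gives $u_\ell = c_\ell (k^*/k)^{\gamma^{(\ell)}}(1+o_P(1))$ for an explicit constant $c_\ell$ depending only on $\gamma^{(\ell)}$ and the scaling, with failure probability $\le C_1 p \exp(-C_2 k)$.

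Combining the two approximations, for $j \in \tau_\ell$ the ratio $Y^{(j)}_{n-\floor{\beta k}:n}/u_\ell$ is asymptotically a positive constant times $\beta^{-\gamma^{(\ell)}}$, which exceeds $1$ for $\beta$ small — but since $\beta \in (0,1)$ is arbitrary and fixed, I would verify the threshold comparison is correct by the choice of the $\floor{\beta k}$-th versus $k$-th order statistic at the appropriate scale: the relevant comparison is really between quantile \emph{levels} $\beta k/n$ and $k/n$, and at level $\gamma^{(\ell)}$ both quantiles scale the same way in $k^*/k$, so the constant-order gap is on the correct side for any fixed $\beta$. For $j \in I_\ell \setminus \tau_\ell$, i.e. $\gamma_j < \gamma^{(\ell)}$, the ratio is of order $(k^*/k)^{\gamma_j - \gamma^{(\ell)}} = (k^*/k)^{-\Delta^{(\ell)}\gamma^{(\ell)}/\gamma^{(\ell)}\cdot(\text{something})} \to 0$, so eventually $Y^{(j)}_{n-\floor{\beta k}:n} < u_\ell$; this is exactly where condition \eqref{eq:condition:p}, $p/(k^*/k)^\Delta \to 0$, is needed, so that after the union bound over $p$ coordinates the polynomially-small separation $(k^*/k)^{-\Delta}$ still dominates $p^{-1}$ times the exponential error, keeping the total failure probability at $C_1 p \exp(-C_2 k)$.

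The main obstacle I anticipate is part (i): controlling the pooled threshold $u_\ell$ under \emph{arbitrary tail dependence}. Unlike the marginal order statistics, the pooled order statistic is a functional of $n|I_\ell|$ dependent entries, so standard i.i.d. order-statistic concentration does not apply directly. The resolution is to work with exceedance \emph{counts} rather than order statistics: for a deterministic level $t$, $N(t) := \#\{(i,j): j \in I_\ell,\ Y^{(j)}_i > t\}$ is a sum over $i$ of i.i.d. (across $i$) bounded random variables $\sum_{j \in I_\ell}\mathbf{1}\{Y^{(j)}_i > t\}$ — independence across $i$ is all we have and all we need — so Bernstein's inequality gives $\Pr(|N(t) - \bE N(t)| > \varepsilon \bE N(t)) \le 2\exp(-c k)$ once $\bE N(t) \asymp k|I_\ell|$. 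Then $\{u_\ell \ge t\} = \{N(t) \ge k|I_\ell|\}$ by definition of the order statistic, and sandwiching $t$ between two deterministic levels $t_- < t_+$ of the right order (both $\asymp (k^*/k)^{\gamma^{(\ell)}}$) pins down $u_\ell$ to within a $(1+o(1))$ factor on an event of probability $\ge 1 - C\exp(-ck)$; no union bound over $p$ is even required for this step, only for the marginal statistics, which is why the final bound carries a single factor of $p$. Everything else — the Potter-type bounds from \ref{condition:uniform:convergence}, the Rényi representation, bookkeeping of constants — is routine.
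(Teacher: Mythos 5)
Your route differs from the paper's in its central step: you try to pin down the pooled threshold $u_\ell$ directly, via concentration of the pooled exceedance count $N(t)$, whereas the paper never controls $u_\ell$ asymptotically at all — it uses the purely deterministic sandwich $\max_{j\in \tau_\ell} Y^{(j)}_{n-k|I_\ell|:n} \le u_\ell \le \max_{j\in I_\ell} Y^{(j)}_{n-k:n}$, which reduces the whole problem to marginal order statistics $X^{(j)}_{n-k_0:n}$ at the four levels $k_0\in\set{k^*,k,\floor{\beta k}, k|I_\ell|}$, each handled by a Bernstein bound (Lemma~\ref{lemma:concentration:order:statistic}) and a union bound over $p$. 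Your alternative is workable in principle, but as written it has a genuine flaw: the summands $\sum_{j\in I_\ell}\mathbf{1}\set{Y^{(j)}_i>t}$ are \emph{not} i.i.d.\ across $i$, because $Y^{(j)}_i = X^{(j)}_i/X^{(j)}_{n-k^*:n}$ carries the sample-dependent normalization $X^{(j)}_{n-k^*:n}$, which couples all $i$. To repair this you must first restrict to the event that every $X^{(j)}_{n-k^*:n}$ is close to $U_j(n/k^*)$ (itself a union bound over $p$, contrary to your claim that no union bound is needed for this step) and then run the count argument with deterministic thresholds on the raw $X$'s; only then does independence across $i$ hold and Bernstein apply.

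The second problem is your claim $u_\ell = c_\ell (k^*/k)^{\gamma^{(\ell)}}(1+o_P(1))$ with $c_\ell$ an ``explicit constant.'' Matching $\mathbb{E}N(t)\approx k|I_\ell|$ gives $t \asymp \suit{k^*|\tau_\ell|/(k|I_\ell|)}^{\gamma^{(\ell)}}$, and $|\tau_\ell|/|I_\ell|$ can shrink like $1/p$ as $p\to\infty$, so the ``constant'' is really a factor polynomial in $p$. This is exactly where condition \eqref{eq:condition:p} is needed: the separation between $Y^{(j)}_{n-\floor{\beta k}:n}\asymp (k^*/(\beta k))^{\gamma^{(\ell+1)}}$ for $j\in I_\ell\setminus\tau_\ell$ and the (possibly deflated) threshold is of order $\suit{(k^*/k)^{-\Delta^{(\ell)}}|I_\ell|}^{\gamma^{(\ell)}}$ up to constants, and \eqref{eq:condition:p} forces this deterministic quantile-ratio to vanish — the same expression that appears in the paper's bound via the lower sandwich term $\max_{j\in\tau_\ell}Y^{(j)}_{n-k|I_\ell|:n}$. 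Your stated reason for invoking \eqref{eq:condition:p} (that the separation must dominate ``$p^{-1}$ times the exponential error'' after the union bound) conflates the probability bound with this deterministic separation and is not the correct mechanism. With these two repairs (deterministic thresholds after controlling the normalizations, and carrying the $|\tau_\ell|/|I_\ell|$ factor through the comparison), your count-based argument does deliver the stated $1-C_1p\exp(-C_2k)$ bound, but in its present form the key concentration step and the role of \ref{condition:sequence} are not correctly justified.
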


\begin{theorem}\label{Theorem:prob:all}
    Assume that conditions \ref{condition:uniform:convergence} and \ref{condition:sequence} hold.   Moreover, assume that number of groups $g$ is known and the partition $\widehat{\boldsymbol{\tau}}$ is obtained by Algorithm \ref{algorithm:known:g}. Then, for any $\beta\in (0,1)$, there exist some constants $C_1>0, C_2>0$ such that,
    for sufficiently large $n$,   
$$
\Pr(\widehat{\boldsymbol{\tau}} = \boldsymbol{\tau}) \ge 1-C_1 pg \exp(-C_2 k).
$$
If we assume that $\log (p)/k \to 0$     as $n\to\infty$,  then, we have that,     
    $$
    \Pr(\widehat{\boldsymbol{\tau}} = \boldsymbol{\tau})  \to 1. 
    $$
\end{theorem}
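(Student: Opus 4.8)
The plan is to bootstrap Theorem~\ref{Theorem:prob:all} from the single-step guarantee of Theorem~\ref{theorem:prob} by a union bound over the $g-1$ iterations of Algorithm~\ref{algorithm:known:g}; no new analytic estimate beyond Theorem~\ref{theorem:prob} is required.

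The first ingredient is a deterministic observation that makes the iterations self-correcting: if the first $\ell-1$ groups have been recovered exactly, i.e.\ $\hattau_1 = \tau_1, \dots, \hattau_{\ell-1} = \tau_{\ell-1}$, then the candidate set entering step $\ell$ is
$$
\hatI_\ell \;=\; \set{1,\dots,p}\setminus(\hattau_1\cup\cdots\cup\hattau_{\ell-1}) \;=\; \tau_\ell\cup\cdots\cup\tau_g \;=\; \set{j:\gamma_j\le\gamma^{(\ell)}} \;=\; I_\ell,
$$
which is precisely the event on which Theorem~\ref{theorem:prob} is conditioned. To turn the ``provided $\hatI_\ell=I_\ell$'' phrasing of that theorem into an honest unconditional bound, I would introduce, for each $\ell$, the auxiliary random set $\widetilde{\tau}_\ell$ obtained by executing step $\ell$ of Algorithm~\ref{algorithm:known:g} with the candidate set forced to be the fixed set $I_\ell$ (so that $u_\ell$ and the subsequent comparison are computed from the columns indexed by $I_\ell$). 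Then $\widetilde{\tau}_\ell$ is a measurable function of the rescaled data, Theorem~\ref{theorem:prob} reads $\Pr(\widetilde{\tau}_\ell=\tau_\ell)\ge 1-C_1 p\exp(-C_2 k)$ with constants uniform in $\ell$, and on the event $\set{\hatI_\ell=I_\ell}$ one has $\hattau_\ell=\widetilde{\tau}_\ell$ by construction.

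Next, set $A_0=\Omega$ and $A_\ell=\set{\hattau_1=\tau_1,\dots,\hattau_\ell=\tau_\ell}$ for $1\le\ell\le g-1$. Since the algorithm assigns all remaining variables to the last group, $A_{g-1}\subseteq\set{\widehat{\boldsymbol{\tau}}=\boldsymbol{\tau}}$, so decomposing the complement according to the first step at which recovery fails,
$$
\Pr\bigl(\widehat{\boldsymbol{\tau}}\ne\boldsymbol{\tau}\bigr)\;\le\;\Pr(A_{g-1}^c)\;\le\;\sum_{\ell=1}^{g-1}\Pr\bigl(A_{\ell-1}\cap\set{\hattau_\ell\ne\tau_\ell}\bigr).
$$
On $A_{\ell-1}$ the displayed identity gives $\hatI_\ell=I_\ell$, hence $\hattau_\ell=\widetilde{\tau}_\ell$ there, so $A_{\ell-1}\cap\set{\hattau_\ell\ne\tau_\ell}\subseteq\set{\widetilde{\tau}_\ell\ne\tau_\ell}$ and each summand is at most $C_1 p\exp(-C_2 k)$ by Theorem~\ref{theorem:prob}. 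Summing the $g-1$ terms gives $\Pr(\widehat{\boldsymbol{\tau}}\ne\boldsymbol{\tau})\le(g-1)C_1 p\exp(-C_2 k)\le C_1 pg\exp(-C_2 k)$, which is the first claim. For the second claim, using $g\le p$,
$$
pg\exp(-C_2 k)\;\le\;\exp\bigl(2\log p-C_2 k\bigr)\;=\;\exp\bigl(-k\,(C_2-2\log p/k)\bigr)\;\longrightarrow\;0
$$
as $n\to\infty$, since $\log p/k\to0$ and $k\to\infty$; hence $\Pr(\widehat{\boldsymbol{\tau}}=\boldsymbol{\tau})\to1$.

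I do not anticipate a genuine obstacle: the substance is already contained in Theorem~\ref{theorem:prob}, and what remains is the bookkeeping of the induction. The only points that need a little care are the precise definition and measurability of $\widetilde{\tau}_\ell$ and the verification that recovering the first $\ell-1$ groups forces $\hatI_\ell=I_\ell$ — both straightforward. Once these are in place, the union bound and the elementary tail estimate for $pg\exp(-C_2 k)$ complete the argument.
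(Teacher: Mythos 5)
Your proof is correct and follows essentially the same strategy as the paper's: both exploit the deterministic observation that recovering $\tau_1,\dots,\tau_{\ell-1}$ exactly forces $\hatI_\ell = I_\ell$, which places step $\ell$ in the hypothesis of Theorem~\ref{theorem:prob}, and then combine the $g-1$ per-step bounds. The only difference is presentational: the paper multiplies the conditional probabilities $p_\ell=\Pr(\hattau_\ell=\tau_\ell\mid\hatI_\ell=I_\ell)$ and lower-bounds $\prod_{\ell=1}^{g-1}(1-\epsilon)\ge 1-(g-1)\epsilon$, whereas you decompose $\Pr(\widehat{\boldsymbol{\tau}}\ne\boldsymbol{\tau})$ over the first iteration at which recovery fails and apply a union bound. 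Your version is arguably a bit tighter in one respect: the paper's chain of conditional equalities implicitly treats the event $\{\hatI_\ell=I_\ell\}$ as equivalent to $\{\hattau_1=\tau_1,\dots,\hattau_{\ell-1}=\tau_{\ell-1}\}$, which for $\ell\ge 3$ is only an implication, not an equivalence (the set complement $\hattau_1\cup\hattau_2$ could match $\tau_1\cup\tau_2$ without the individual groups matching); your auxiliary $\widetilde{\tau}_\ell$ construction and the inclusion $A_{\ell-1}\cap\{\hattau_\ell\ne\tau_\ell\}\subseteq\{\widetilde{\tau}_\ell\ne\tau_\ell\}$ sidestep this issue cleanly while reaching the identical bound.
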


\begin{remark}\label{remark:partical:choice}
    By condition \ref{condition:sequence}, $k$ should be chosen relatively small, while $k^*$ should be chosen sufficiently large.  A practical suggestion is to set $k = \lfloor3\log^{1.05} p\rfloor$ and $k^* = \lfloor n_0^{0.98}\rfloor$, where $n_0$ is the minimum number of the positive observation across      $p$-dimensions, i.e., 
     $n_0= \min_{1\le j\le p} \sum_{i=1}^n \mI\suit{X_{i}^{(j)}>0 }. $

    Theoretically, $\beta$ can be chosen as an arbitrary constant in the interval $(0,1)$. However, the choice of $\beta$ involves a trade-off. If $\beta$ is chosen too large, there is a risk of omitting variables that should rightfully belong to $\tau_{\ell}$. Conversely, if $\beta$
    is chosen too small, the procedure may include an excessive number of variables in $\tau_{\ell}$. 
    Upon examining the details of the proofs, a high level of $\beta$ should be chosen when $k^*/k$ is low and $p$ is high. We then propose a practical choice 
    $$
    \beta = \min(2(k^*/k)^{-1}p+0.5,0.9).
    $$
\end{remark}

Theorem \ref{Theorem:prob:all} establishes that the proposed Algorithm \ref{algorithm:known:g} achieves consistent group partitioning under mild regularity conditions when the number of groups
$g$ is known. The consistency result holds without imposing any assumptions about the dependence structure among the $p$ random variables, which is a significant advantage of our methodology. This robustness makes the algorithm especially valuable in applications where the dependence structure may be complex and unknown.

\subsection{Number of groups $g$ is unknown}

If the number of groups $g$ is unknown, we can continuously partition the set $\set{1,\dots,p}$ until it is empty. The detailed procedure is given in  Algorithm 
\ref{algorithm:unknown:g}.
\begin{algorithm}[htbp]
    \SetAlgoLined
    \caption{Tail Clustering with unknown $g$.}
    \label{algorithm:unknown:g}
    \KwIn{Data $\mbX_1,\dots,\mbX_n$,    parameters $\beta \in (0,1), k, k^*\in \set{1,\dots,n}$}
    Rescale the observations
    $
    Y_{i}^{(j)} = \frac{X_{i}^{(j)}}{X_{n-k^*:n}^{(j)}} 
    $ for $i = 1,\dots,n$, $j = 1,\dots,p$\; 
    Set  $\hatI_1=  \set{1,\dots,p}$; $\ell=1$ \;
    \While{$|\hatI_{\ell}|>0$}{
            Define $u_{\ell}$ as the $(k\cdot |\hatI_{\ell}|)$th upper order statistic of the $(n\cdot  |\hatI_{\ell}|)$- dimensional vector $(Y_{i}^{(j)})_{i=1,\dots,n, \ j\in  \hatI_{\ell}}$ \;
           Set $\widehat{\tau}_{\ell}  = \set{j: j\in \hatI_{\ell}, \ Y_{n-\floor{\beta k}:n}^{(j)}\ge u_{\ell} }$ \;
           Set $\hatI_{\ell+1} = \hatI_{\ell} \backslash \hattau_{\ell}$\;
           Set $\ell \gets \ell+1$\; 
    }
    \KwOut{Partition  $\widehat{\boldsymbol{\tau}}=\set{\hattau_1,\hattau_2,\dots}$.} 
\end{algorithm}

The following theorem establishes the consistency of the clustering procedure in Algorithm \ref{algorithm:unknown:g}. 
\begin{theorem}\label{Theorem:prob:all:unknown:g}
    Assume that conditions \ref{condition:uniform:convergence} and \ref{condition:sequence} hold.   Moreover, assume that number of groups $g$ is unknown and the partition $\widehat{\boldsymbol{\tau}}$ is obtained by Algorithm \ref{algorithm:unknown:g}.  Then, for any $\beta\in (0,1)$, there exist some constants $C_1>0, C_2>0$ such that,
    for sufficiently large $n$,   
$$
\Pr(\widehat{\boldsymbol{\tau}} = \boldsymbol{\tau}) \ge 1-C_1 pg \exp(-C_2 k).
$$
If we assume that $\log (p)/k \to 0$     as $n\to\infty$,  then, we have that,     
    $$
    \Pr(\widehat{\boldsymbol{\tau}} = \boldsymbol{\tau})  \to 1. 
    $$
\end{theorem}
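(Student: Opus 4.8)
The plan is to bootstrap off the known-$g$ analysis and add a single observation about the terminal iteration of the \textbf{while} loop. Let $A = \bigcap_{\ell=1}^{g}\{\hattau_\ell = \tau_\ell\}$. First I would handle the first $g-1$ iterations exactly as in the proof of Theorem~\ref{Theorem:prob:all}: by induction on $\ell$, the base case $\hatI_1 = \{1,\dots,p\} = I_1$ holds by definition of $I_1 = \{j:\gamma_j\le\gamma^{(1)}\}$; and if $\hatI_\ell = I_\ell$, then Theorem~\ref{theorem:prob} gives $\hattau_\ell = \tau_\ell$ with probability at least $1 - C_1 p\exp(-C_2 k)$, on which event $\hatI_{\ell+1} = \hatI_\ell\setminus\hattau_\ell = I_\ell\setminus\tau_\ell = I_{\ell+1}$, so the induction propagates. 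After $g-1$ successful steps the candidate set is $\hatI_g = I_g = \{j:\gamma_j\le\gamma^{(g)}\} = \tau_g$.

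The genuinely new point is the $g$-th iteration, in which $\hatI_g = \tau_g$ so that every remaining coordinate carries the common index $\gamma^{(g)}$. Then the \emph{exclusion} direction of the argument is vacuous, and it suffices to show that each $j\in\tau_g$ passes the test $Y^{(j)}_{n-\floor{\beta k}:n}\ge u_g$, where $u_g$ is the $(k|\hatI_g|)$-th upper order statistic of the pooled rescaled sample $(Y_i^{(j)})_{i\le n,\,j\in\hatI_g}$. This is exactly the \emph{inclusion} half of the proof of Theorem~\ref{theorem:prob} applied to the heaviest --- here, the only --- group present in $\hatI_g$: at leading order $Y^{(j)}_{n-\floor{\beta k}:n}\approx (\beta k/k^*)^{-\gamma^{(g)}}$ while $u_g\approx \big(k^* |\tau_g|/(k|\hatI_g|)\big)^{\gamma^{(g)}} = (k^*/k)^{\gamma^{(g)}}$, so the required inequality holds strictly for any $\beta\in(0,1)$, and the concentration estimates underlying Theorem~\ref{theorem:prob} (which use only conditions~\ref{condition:uniform:convergence} and~\ref{condition:sequence} and the fact that $\tau_g$ is the top group of the candidate set, never that $\ell\le g-1$) close the remaining gap at the same exponential rate. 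Hence with probability at least $1 - C_1 p\exp(-C_2 k)$ we get $\tau_g\subseteq\hattau_g$, and since trivially $\hattau_g\subseteq\hatI_g = \tau_g$, in fact $\hattau_g = \tau_g$; consequently $\hatI_{g+1} = \emptyset$, the loop terminates after exactly $g$ iterations, and the returned partition is $\{\hattau_1,\dots,\hattau_g\} = \boldsymbol{\tau}$. On $A$ no earlier iteration empties the candidate set, since $\hattau_\ell = \tau_\ell\neq\emptyset$ and $\hatI_{\ell+1} = I_{\ell+1}\neq\emptyset$ for $\ell<g$, so the loop cannot stall or stop early.

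Finally a union bound over the $g$ iterations gives $\Pr(A^c)\le C_1 pg\exp(-C_2 k)$, and since $\{\widehat{\boldsymbol{\tau}} = \boldsymbol{\tau}\}\supseteq A$ this is the asserted non-asymptotic bound. When $\log(p)/k\to 0$, using $g\le p$ one has $pg\exp(-C_2 k)\le\exp(2\log p - C_2 k)\to 0$, which yields $\Pr(\widehat{\boldsymbol{\tau}} = \boldsymbol{\tau})\to 1$.

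I expect the main obstacle to be precisely the terminal iteration: one must be certain that feeding the homogeneous candidate set $\hatI_g = \tau_g$ into the extraction rule returns the whole set, rather than a proper subset (which would leave $\hatI_{g+1}\neq\emptyset$ and spawn a spurious extra cluster) or the empty set (which would make the loop stall). Everything hinges on the inclusion bound from the proof of Theorem~\ref{theorem:prob} being uniform over all coordinates of the top group and remaining valid when that group is the only one left; once this is checked, the remainder is the routine induction and union bound sketched above.
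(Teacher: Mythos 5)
Your proposal is correct and matches the paper's proof essentially step for step: invoke the known-$g$ analysis for the first $g-1$ iterations, then observe that the terminal iteration reduces to the \emph{inclusion} half of the Theorem~\ref{theorem:prob} argument (showing $\min_{j\in I_g}Y^{(j)}_{n-\floor{\beta k}:n} > u_g$, for which the exclusion side is vacuous since $\hatI_g=\tau_g$ is homogeneous), and close with a union bound. The paper likewise bounds $u_g\le \max_{j\in I_g}Y^{(j)}_{n-k:n}$ and declares the remaining inequality analogous to the proof of Theorem~\ref{theorem:prob}; your spelling out of the induction, the leading-order comparison $\beta^{-\gamma^{(g)}}(k^*/k)^{\gamma^{(g)}} > (k^*/k)^{\gamma^{(g)}}$, and the $g\le p$ bound in the final limit simply makes explicit what the paper leaves implicit.
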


\begin{remark}
After clustering the $p$ random variables into $g$ groups, one can estimate the group-level extreme value indices $\gamma^{(\ell)}$, $\ell = 1, \dots, g$, using within-group averages of the Hill estimators \citep{chen2022distributed, wang2025clustering}. \citet{chen2024high} provided an upper bound on the estimation error for such estimators. Our simulation results confirm that the group-based estimator outperforms individual Hill estimators in terms of accuracy.

\citet{daouia2024optimal} further suggested leveraging the tail dependence structure by using weighted averages of the Hill estimators, rather than simple averages. However, the asymptotic theory for such weighted estimators has so far been developed only in finite dimensions.
\end{remark}

\section{Simulation}\label{sec:simulation}

We generate $p = gq$ random variables from $g$ groups, with each group containing $q$ random variables. Without loss of generality, the groups are defined as:
$$\tau_1 =\set{1,\dots,q}, \tau_2 =\set{q+1,\dots,2q}, \dots, \tau_g = \set{q(g-1)+1,\dots, gq}.$$ 
In other words, for each $j=1,\dots, p$, the variable of dimension $j$ belongs to the group $c_j := \lceil j/q \rceil$.
The extreme value index $\gamma^{(\ell)}$ for the $\ell$-th group is $\gamma^{(\ell)} = (1-\Delta)^{\ell-1}$, $ \ell=1,\dots,g$.
The data are generated from the following two models. 

{\noindent \bf Model (A)}.  The random variables $X^{(j)}, j=1,\dots, p$ are generated independently from an absolute Student-t distribution with degree of freedom $1/\gamma_j$.

{\noindent \bf Model (B)}  We generate $(\widetilde{X}^{(1)},\dots, \widetilde{X}^{(p)})$   from a multivariate Cauchy distribution with scale matrix $\Sigma = (\sigma_{ij})_{i,j=1:p}$ where $\sigma_{ij} = 0.5^{|i-j|}$. Then we transform the marginal distribution of $\widetilde{X}_j$ to an absolute Student-t distribution
with degree of freedom $1/\gamma_j$,  by 
$$
X^{(j)} = \abs{\text{St}_{1/\gamma_j}^{-1}\set{\text{St}_1(\widetilde{X}^{(j)})}}, \quad j=1,\dots,p.
$$
 Here $\text{St}_v(\cdot)$ denotes the cumulative distribution function of a Student-t distribution with degree of freedom $v$.

 Throughout the simulation study, we set $n = 2000$.
 In each replication, we obtain that dimension $j$ is clustered to the group $\widehat{c}_j \in \set{1, 2, \dots}$.
Then the accuracy for this replication is defined as $p^{-1}\sum_{j=1}^p \mI\suit{c_j = \widehat{c}_j}$. The average clustering accuracy is computed based on $B=500$ replications.
    We compare our proposed algorithms with the tail
    $k$-means clustering algorithm, which is detailed in Algorithm \ref{algorithm:tail:kmeans}.  
    
    \begin{algorithm}[htbp]
        \SetAlgoLined
        \caption{Tail $k$-means}
        \label{algorithm:tail:kmeans}
        \KwIn{Data $\mbX_1,\dots,\mbX_n$, number of groups $g$,  parameters $k$}
    For $j=1,\dots,p$, estimate each extreme value index $\gamma_j$ using a Hill estimator 
    $$
    \widehat{\gamma}_j = \frac{1}{k}\sum_{0=1}^{k-1} \log X_{n-i,n}^{(j)} - \log X_{n-k,n}^{(j)};      \;
    $$
  
    Apply the standard $k$-means algorithm to the estimated indices  to partition the $p$ random variables into $g$ groups\;
    Reorder the $g$ groups such that the sum of 
  $\widehat{\gamma}_j$'s in group $\ell_1$  is larger  than 
  that 
  in group $\ell_2$ whenever $\ell_1<\ell_2$ \;
        \KwOut{Partition  $\widehat{\boldsymbol{\tau}}=\set{\hattau_1,\hattau_2,\dots,\widehat{\tau}_g}$.} 
    \end{algorithm}

First, we investigate how the dimensionality affects the performance of the proposed algorithm.  We fix $\Delta=0.5$ and set  $k = \lfloor 3\log^{1.05} (p) \rfloor$, $k^* = \lfloor n^{0.98}\rfloor$ and $\beta =  \min(2(k^*/k)^{-1}p+0.5,0.9)$, as suggested in Remark \ref{remark:partical:choice}. 
Figure \ref{fig:q} shows the average clustering accuracy for varying values of $g$ and $q$. We observe that, the proposed algorithm consistently outperforms the tail $k$-means algorithm in terms of clustering accuracy across all values of $g$ and $q$. Moreover, its performance is relatively robust to the choice of $q$, but exhibits greater sensitivity to the number of groups $g$.
 This finding is align with our theoretical results in Theorem \ref{Theorem:prob:all}.

 \begin{figure}[htb]
    \centering
    \includegraphics[width = 1\textwidth]{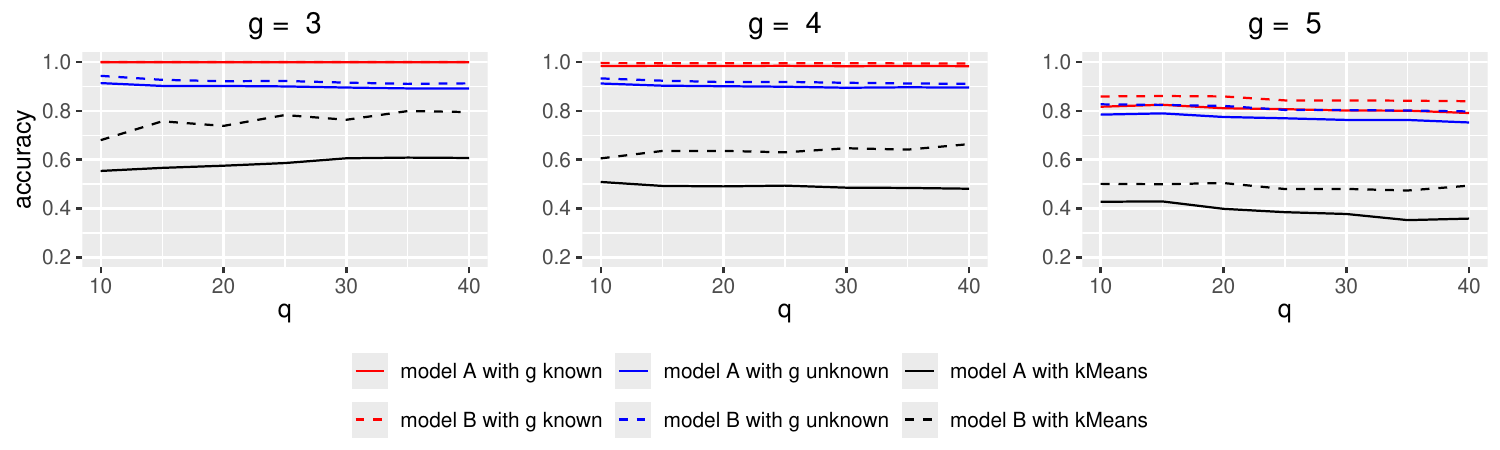}
    \caption{Clustering accuracy  for varying values of $q$ and $g$. }
    \label{fig:q}
\end{figure}

Next,  we  investigate how $\Delta$ affects the performance of the proposed algorithm. We fix $q=15$, $k = \lfloor 3\log^{1.05} (p) \rfloor$, $k^* = \lfloor n^{0.98}\rfloor$ and $\beta =  \min(2(k^*/k)^{-1}p+0.5,0.9)$. 
The averaged clustering accuracy for $g\in \set{3,4,5}$ is plotted against varying values of $\Delta$ in Figure \ref{fig:delta}. 
As $\Delta$ increases, the accuracy of the proposed algorithm initially improves, reaches a peak, and then declines. This behavior can be explained as follows. For low level of  $\Delta$,  the differences in the extreme value indices between groups are insufficient to reliably distinguish the clusters. Conversely, for high level of $\Delta$, $\gamma^{(g)} = (1-\Delta)^{g-1}$  becomes too close to zero, which affects  the performance of  the proposed algorithm based on the assumption of regular variation.

  \begin{figure}[htb]
    \centering
    \includegraphics[width = 1\textwidth]{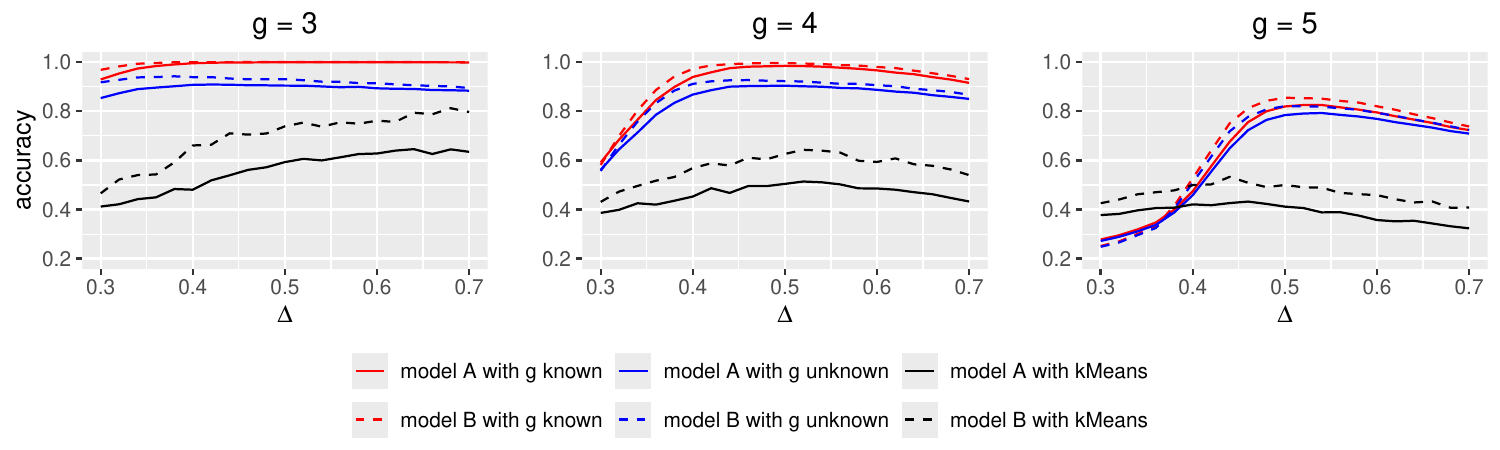}
    \caption{Clustering accuracy  for varying values of $\Delta$ and $g$. }
    \label{fig:delta}
\end{figure}

Finally, we investigate the sensitivity of the proposed algorithm to the parameters $k, k^*$ and $\beta$. Specifically, we evaluate whether the practical choices outlined in Remark \ref{remark:partical:choice} are reasonable. 
To this end, we vary the choice of one parameter while keeping the other two fixed.
 The data are generated with  $\Delta=0.5, q= 15$ and $g\in \set{3,4,5}$. 
 The averaged clustering accuracy is plotted against $k$, $k^*$, and $\beta$ in  Figure \ref{fig:hyper}. 
 The results reveal that the performance of the proposed algorithm is sensitive to the choice of $k$. In particular, a relatively low value of $k$ is required for optimal performance, consistent with the recommendation in Remark \ref{remark:partical:choice}. In contrast, the tail $k$-means algorithm usually requires a relatively high level of $k$ to ensure that $\widehat{\gamma}_j$ is an effective estimator of $\gamma_j$. 
 Regarding $\beta$, the proposed algorithm performs poorly when $\beta$ is too small, while larger values generally lead to better performance—particularly when the number of groups $g$ is large. Moreover, the algorithm is 
 less sensitive to the choice of $k^*$ compared to $k$ and $\beta$.
Numerical results support our practical suggestion of setting \( k^* = \lfloor n^{0.98} \rfloor \), which appears to be a reasonable choice.

\begin{figure}[htbp]
    \centering
    \begin{subfigure}[b]{1\textwidth}
        \includegraphics[width = 1\textwidth]{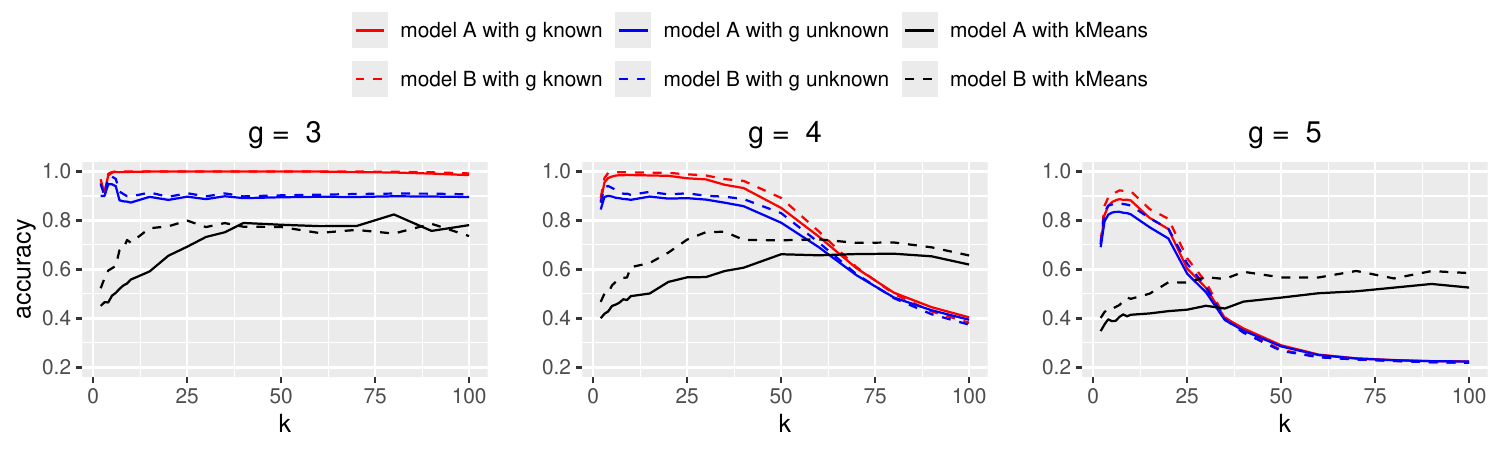}
    \end{subfigure}   
\begin{subfigure}[b]{1\textwidth}
    \includegraphics[width = 1\textwidth]{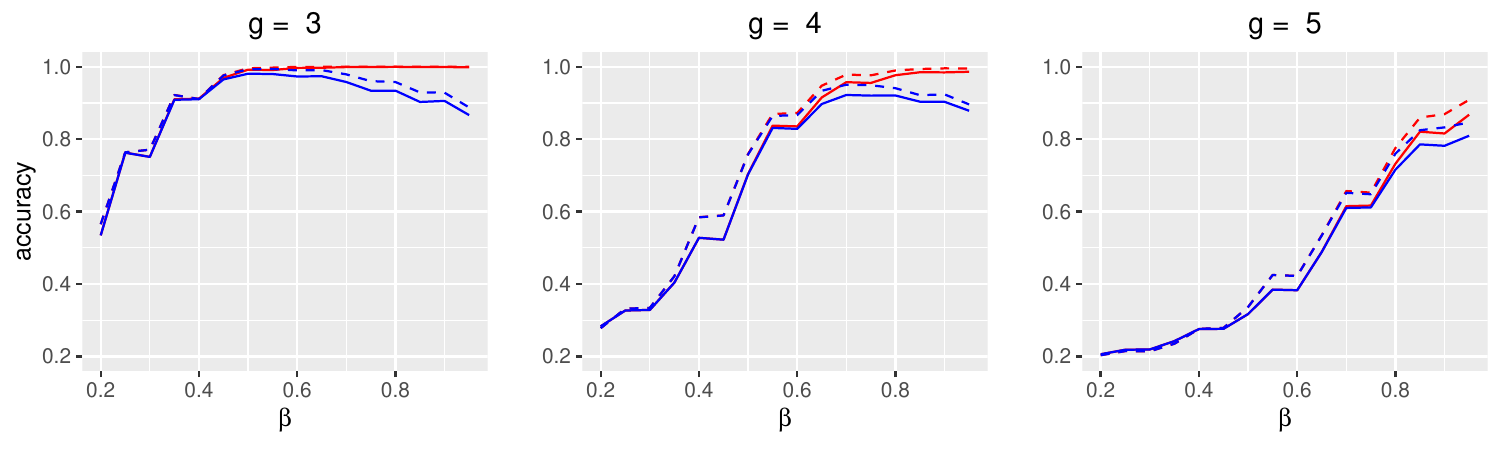}
\end{subfigure}
\begin{subfigure}[b]{1\textwidth}
    \includegraphics[width = 1\textwidth]{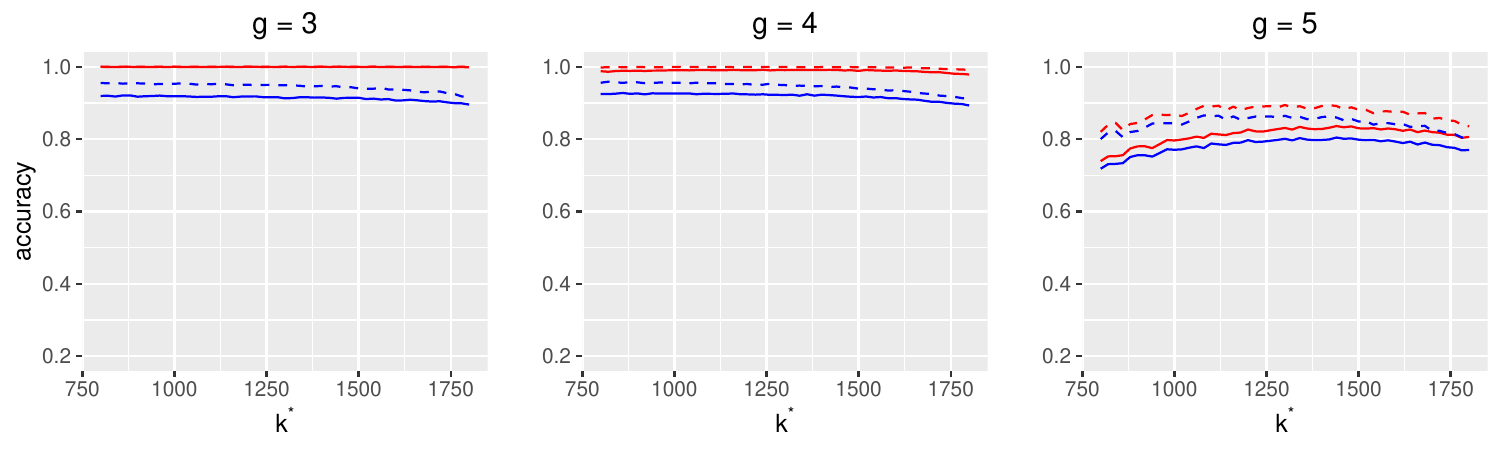}
\end{subfigure}
\caption{Clustering accuracy against different values of $k, \beta$ and $k^*$.}
\label{fig:hyper}
\end{figure}

After clustering the data into $g$ groups $\widehat{\tau}_1,\dots,\widehat{\tau}_g$, we proceed to estimate the extreme value indices by aggregating information within each group.
Specifically, for each group $\ell=1,\dots,g$, we compute the average of the Hill estimates  within that group to obtain a group-level estimate of the extreme value index:
$$
\widetilde{\gamma}^{(\ell)} =\frac{1}{|\widehat{\tau}_{\ell}|} \sum_{j: j\in \widehat{\tau}_{\ell}} \widehat{\gamma}_j, \quad \ell=1,\dots,g.
$$

The final estimate for each individual $\gamma_j$ is then  obtained by assigning the group-level estimate to all members of the group:
$$
\widetilde{\gamma}_j = \sum_{\ell=1}^g \widetilde{\gamma}^{(\ell)}\mI\suit{j\in \widehat{\tau}_{\ell}}, \quad j=1,\dots, p.
$$
We assess the estimation accuracy using the mean squared error (MSE) 
$
p^{-1}\sum_{j=1}^p (\widetilde{\gamma}_j-\gamma_j)^2.
$
We fix $\Delta=0.5$ and set  $k = \lfloor 3\log^{1.05} (p) \rfloor$, $k^* = \lfloor n^{0.98}\rfloor$ and $\beta =  \min(2(k^*/k)^{-1}p+0.5,0.9)$. The average MSE  across the $B=500$ replications 
 for varying values of $g$ and $p$ are shown in Figure \ref{fig:mse:q}.  We observe that the proposed clustering technique significantly reduces the average MSE. In contrast, the tail $k$-means algorithm does not improve the performance of the subsequent estimation process. Furthermore, the results under model (A) are notably better than those under model (B), likely due to the presence of dependence among covariates in model (B). In such cases, simple averaging is suboptimal, and more efficient aggregation may require accounting for the dependence structure through weighted averaging \citep{daouia2024optimal}.

 \begin{figure}[htb]
    \centering
    \includegraphics[width = 1\textwidth]{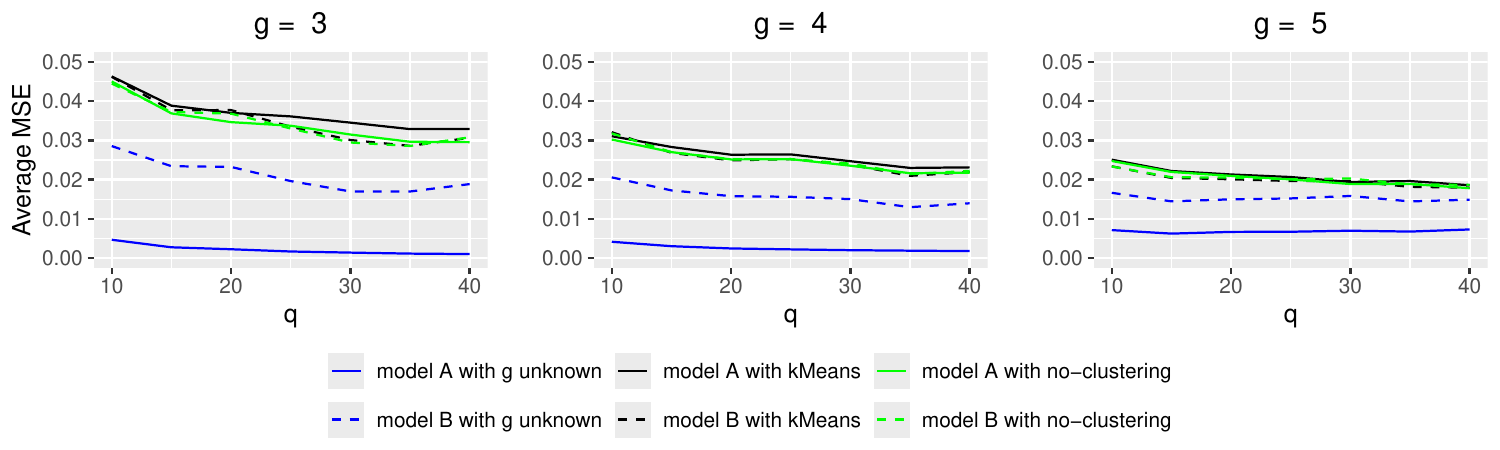}
    \caption{Average MSE for varying values of  $q$ and $g$. }
    \label{fig:mse:q}
\end{figure}

\section{Real data Application}

We demonstrate the practical utility of our proposed algorithm by clustering $p=21$  currency exchange rates (relative to the U.S. dollar) based on daily loss returns (negative log-returns) over the period 2000–2019.
After removing missing observations, the dataset comprises $n=5014$ observations.\footnote{The dataset was downloaded from https://www.kaggle.com/datasets/brunotly/foreign-exchange-rates-per-dollar-20002019.} Following the parameter specifications in Remark \ref{remark:partical:choice}, 
 we set  $k = \lfloor 3\log^{1.05} (p) \rfloor$, $k^* = \lfloor n_0^{0.98}\rfloor$ and $\beta =  \min(2(k^*/k)^{-1}p+0.5,0.9)$. 
 The number of clusters $g$ is inferred using Algorithm \ref{algorithm:unknown:g}, which identifies three distinct groups. 
 
 The first cluster comprises four currencies: CNY (China), MYR (Malaysia), LKR (Sri Lanka), and TWD (Taiwan). These currencies originate from Asian economies and are generally classified as emerging or developing markets. Their grouping likely reflects shared characteristics such as more regulated exchange rate regimes, relatively higher economic volatility, or limited integration with global financial markets.
 
 The second cluster consists of a broader set of currencies, including AUD (Australia), BRL (Brazil), CAD (Canada), INR (India), KRW (South Korea), MXN (Mexico), SGD (Singapore), JPY (Japan), and THB (Thailand). This group spans both emerging and developed economies. Despite their heterogeneity in geographic and developmental context, these currencies may exhibit similar statistical behavior due to shared exposure to global trade flows, commodity markets, or comparable monetary policy environments.

 The third cluster includes EUR (Euro), NZD (New Zealand), GBP (United Kingdom), ZAR (South Africa), DKK (Denmark), NOK (Norway), SEK (Sweden), and CHF (Switzerland).
 Many of these currencies are European or closely integrated with European financial systems. Several, such as CHF, EUR, and GBP, are often considered relatively stable or associated with safe-haven behavior during periods of global financial stress. 
 The inclusion of ZAR and NZD in this group may suggest shared macro-financial attributes with developed economies, such as similar monetary policy regimes or trade relationships, though further analysis would be required to substantiate these connections.

We plot the  Hill estimates of the extreme value indices for these currencies,  along with a 95\% confidence interval in Figure \ref{figure:hill:currencies}. From the figure, we observe that the currencies in Group 1 generally exhibit larger values of the extreme value index compared to those in Groups 2 and 3. Similarly, the currencies in Group 2 tend to have larger extreme value indices than those in Group 3. These findings align with the clustering results.  However, this pattern does not hold uniformly across all individual currencies. This observation highlights a key distinction between our method and the tail $k$-means algorithm, which relies exclusively on the estimated extreme value indices for clustering. 
\begin{figure}[htbp]
    \centering
   \includegraphics[width=1\textwidth]{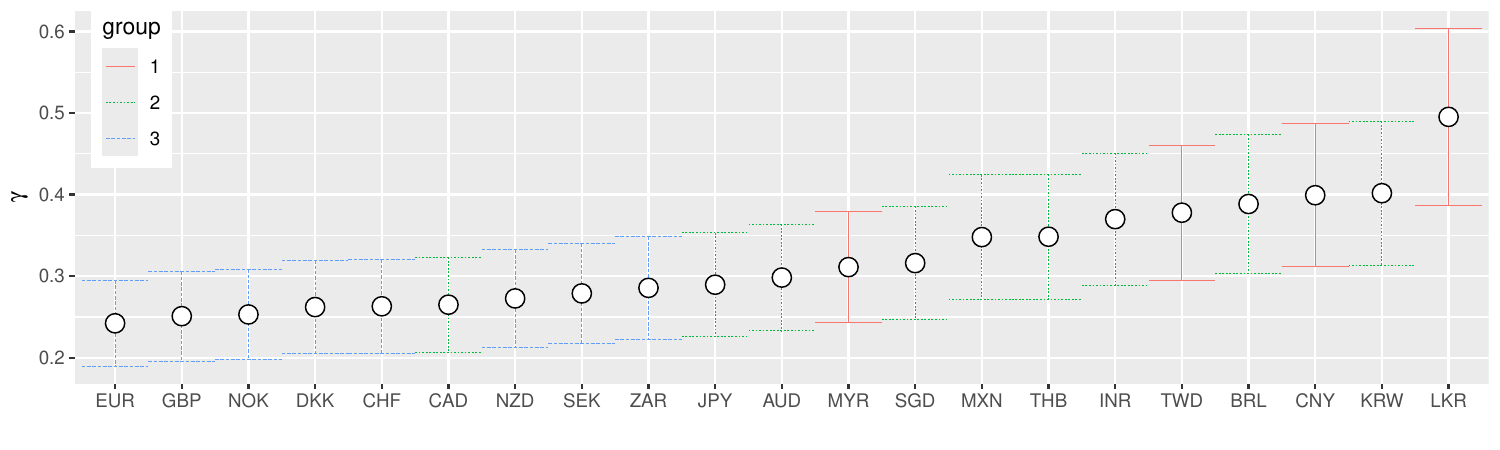} 
   \caption{Hill estimates with upper and lower 95\% confidence limits for the $p=21$ currencies. }
   \label{figure:hill:currencies}
\end{figure}

To assess the robustness of the clustering results, we impose a constraint forcing the currencies to be partitioned into 
$g=2$ groups. Under this restriction, the first group remains unchanged, while the second group combines the currencies originally assigned to the second and third groups.  


\section{Proofs}

\begin{proof}[Proof of Theorem \ref{theorem:prob}]

    By construction, we have that,
   $$
      \max_{j\in \tau_l}  Y_{n- k|I_{\ell}|:n}^{(j)} \le u_{\ell} \le \max_{j\in I_{\ell}} Y_{n-k:n}^{(j)}.
   $$
   Thus, $\hattau_{\ell} = \tau_{\ell}$, provided that, 
   \begin{align}
       \min_{j \in \tau_{l}}Y_{n-\floor{\beta k}:n}^{(j)}\ge &\max_{j\in I_{\ell}} Y_{n-k:n}^{(j)}, \label{s:prob1}\\
       \max_{j \in I_{\ell}\backslash \tau_{\ell}  }Y_{n-\floor{\beta k}:n}^{(j)} < & \max_{j\in \tau_{\ell}} Y_{n-k|I_{\ell}|:n}^{(j)}.  \label{s:prob2}
   \end{align}

Consider any constant $\zeta$ satisfying
   \begin{equation}\label{s:choice:zeta}
    0<\zeta<  \min\set{\frac{\delta}{2(1+\delta)}, 2(1+\beta^{\delta/2})^{-1}-1 },
   \end{equation}
   where $\delta = \min_{1\le j\le p} \gamma_j$, 
   and  a set
   $$
   \mathcal{S} = \set{\abs{\frac{X_{n-k_0:n}^{(j)}}{U_j(n/k_0)} -1} \le \zeta, \quad j \in I_{\ell}, \ k_0\in \set{k^*,k, \floor{\beta k}, k|I_{\ell}|  }   }.
   $$
   We prove  Theorem \ref{theorem:prob} by showing that, for sufficiently large $n$, 
   \begin{itemize}
       \item[(i)]  The relations \eqref{s:prob1} and \eqref{s:prob2} hold on the set $\mathcal{S}$.
       \item[(ii)] The set $\mathcal{S}$ holds with probability larger than $1-C_1p\exp(-C_2 k)$. 
   \end{itemize}
   
{\bf\noindent Proof of  (i)}. 
   On the set $\mathcal{S}$, we have that,
   $$
   \begin{aligned}
      \frac{\min_{j \in \tau_{\ell}} Y_{n-\floor{\beta k}:n}^{(j)} }{ \max_{j\in I_{\ell}} Y_{n-k:n}^{(j)}} =&    \frac{\min_{j \in \tau_{\ell}} \frac{X_{n-\floor{\beta k}:n}^{(j)}}{ X_{n-k^*:n}^{(j)} } }{ \max_{j\in I_{\ell}} \frac{X_{n-k:n}^{(j)}}{X_{n-k^*:n}^{(j)}} }  \\ 
      \ge &\frac{ \min_{j \in \tau_{\ell}} \frac{U_j(n/(\beta k))}{U_j(n/k^*)} }{\max_{j\in I_{\ell}} \frac{U_{j}(n/k)}{U_{j}(n/k^*)} } \suit{\frac{1-\zeta}{1+\zeta}}^2.
   \end{aligned}
   $$
Condition \ref{condition:uniform:convergence} implies that, as $t\to\infty$, 
\begin{equation}\label{s:uniform:U}
    \max_{1\le j\le p}\sup_{x>1}\abs{\frac{U_j(tx)}{U_j(t)}x^{-\gamma_j}-1}\to 0.
\end{equation}
Thus,    for any constant $\varepsilon>0$, there exists $n_1 =n_1(\varepsilon)\ge 1$, such that, for any $n\ge n_1$, 
$$
\begin{aligned}
    (1-\varepsilon) \suit{\frac{k^*}{k\beta}}^{\gamma_j}\le \frac{U_j(n/(\beta k))}{U_j(n/k^*)} \le (1+\varepsilon) \suit{\frac{k^*}{k\beta}}^{\gamma_j},& \quad j\in I_{\ell}, \\
    (1-\varepsilon) \suit{\frac{k^*}{k}}^{\gamma_j}\le  \frac{U_j(n/ k)}{U_j(n/k^*)}\le (1+\varepsilon) \suit{\frac{k^*}{k}}^{\gamma_j}, &\quad j\in I_{\ell}.
\end{aligned}
$$

Then we have 
$$
\begin{aligned}
    \frac{\min_{j \in \tau_{\ell}} Y_{n-\floor{\beta k}:n}^{(j)} }{ \max_{j\in I_{\ell}} Y_{n-k:n}^{(j)}} \ge & \frac{\min_{j \in \tau_{\ell}}  \set{k^*/(k\beta)}^{\gamma_j}  }{\max_{j \in I_{\ell}}  \suit{k^*/k}^{\gamma_j}  }  \suit{\frac{1-\varepsilon}{1+\varepsilon}}\suit{\frac{1-\zeta}{1+\zeta}}^2\\
    =&   \frac{  \set{k^*/(k\beta)}^{\gamma^{(\ell)}}  }{  \suit{k^*/k}^{\gamma^{(\ell)}}  } \suit{\frac{1-\varepsilon}{1+\varepsilon}}\suit{\frac{1-\zeta}{1+\zeta}}^2 \\
    =& \beta^{-\gamma^{(\ell)}} \suit{\frac{1-\varepsilon}{1+\varepsilon}}\suit{\frac{1-\zeta}{1+\zeta}}^2,
\end{aligned}
$$
on the set $\mathcal{S}$.
   By \eqref{s:choice:zeta}, we have that,  
   $$
   \beta^{-\gamma^{(\ell)}} \suit{\frac{1-\zeta}{1+\zeta}}^2>1. 
   $$
   By choosing $\varepsilon = \varepsilon(\zeta,\beta, \min_{1\le\ell\le g}\gamma^{(\ell)})$ sufficiently small, we have that, for  $n\ge n_1$, 
   $$
   \beta^{-\gamma^{(\ell)}} \suit{\frac{1-\varepsilon}{1+\varepsilon}}\suit{\frac{1-\zeta}{1+\zeta}}^2>1, 
   $$
    and hence \eqref{s:prob1} holds.

For \eqref{s:prob2}, on the set $\mathcal{S}$,  
    we have that, 
   $$
   \begin{aligned}
   \frac{\max_{j \in I_{\ell}\backslash \tau_{\ell}  }Y_{n-\floor{\beta k}:n}^{(j)}}{\max_{j\in \tau_{\ell}} Y_{n-k|I_{\ell}|:n}^{(j)}} = & \frac{\max_{j \in I_{\ell}\backslash \tau_{\ell}  }  \frac{X_{n-\floor{\beta k}:n}^{(j)}}{X_{n-k^*:n}^{(j)}}  }{\max_{j\in \tau_{\ell}} \frac{X_{n-k|I_{\ell}|:n}^{(j)}}{X_{n-k^*:n}^{(j)}}  } \\
   \le &      \frac{\max_{j \in I_{\ell}\backslash \tau_{\ell}  } \frac{U_j(n/(\beta k))}{U_j(n/k^*)}   }{\max_{j\in \tau_{\ell}} \frac{U_j(n/(k|I_{\ell}|))   }{U_j(n/k^*)   }}               \suit{\frac{1+\zeta}{1-\zeta}}^2.
   \end{aligned}
   $$
By using the condition \ref{condition:sequence}, we have that, there exists $n_2\ge 1$, such that, for $n\ge n_2$,
$$
\frac{k^*}{k|I_{\ell}|} \ge 1, \quad \frac{k^*}{k\beta} \ge 1.
$$
Thus, by \eqref{s:uniform:U},  we have that, for $n\ge \max(n_1,n_2)$, 
$$
\begin{aligned}
    \suit{\frac{k^*}{k\beta}}^{\gamma_j}(1-\varepsilon) \le  \frac{U_j(n/(\beta k))}{U_j(n/k^*)} \le \suit{\frac{k^*}{k\beta}}^{\gamma_j}(1+\varepsilon), & \quad j \in I_{\ell}\backslash \tau_{\ell}, \\
    \suit{\frac{k^*}{k|I_{\ell}|}}^{\gamma_j}(1-\varepsilon) \le \frac{U_j(n/(k|I_{\ell}|))   }{U_j(n/k^*)   }\le \suit{\frac{k^*}{k|I_{\ell}|}}^{\gamma_j}(1+\varepsilon), & \quad j\in \tau_{\ell}.
\end{aligned}
$$
It follows that, 
$$
\begin{aligned}
    \frac{\max_{j \in I_{\ell}\backslash \tau_{\ell}  } \frac{U_j(n/(\beta k))}{U_j(n/k^*)}   }{\max_{j\in \tau_{\ell}} \frac{U_j(n/(k|I_{\ell}|))   }{U_j(n/k^*)   }}           \le &  \frac{ \max_{j \in I_{\ell}\backslash \tau_{\ell}  }  \set{k^*/(k\beta)}^{\gamma_j} }{ \max_{j\in \tau_{\ell}}  \set{k^*/(k|I_{\ell})|}^{\gamma_j}  } \suit{\frac{1+\varepsilon}{1-\varepsilon}}\suit{\frac{1+\zeta}{1-\zeta}}^2 \\
    =& \frac{\set{k^*/(k\beta)}^{\gamma^{(\ell+1)}}}{\set{k^*/(k|I_{\ell}|)}^{\gamma^{(\ell)}}}    \suit{\frac{1+\varepsilon}{1-\varepsilon}}\suit{\frac{1+\zeta}{1-\zeta}}^2  \\ 
    =& (k^*/k)^{\gamma^{(\ell+1)}-\gamma^{(\ell)}} |I_{\ell}|^{\gamma^{(\ell)}} \beta^{-\gamma^{(\ell+1)}}  \suit{\frac{1+\varepsilon}{1-\varepsilon}}\suit{\frac{1+\zeta}{1-\zeta}}^2   \\ 
    =& \suit{ (k^*/k)^{\frac{\gamma^{(\ell+1)}}{\gamma^{(\ell)}}-1}  |I_{\ell}|}^{\gamma^{(\ell)}} \beta^{-\gamma^{(\ell+1)}}    \suit{\frac{1+\varepsilon}{1-\varepsilon}}\suit{\frac{1+\zeta}{1-\zeta}}^2   \\
    =&  \suit{ (k^*/k)^{-\Delta^{(\ell)}}  |I_{\ell}|}^{\gamma^{(\ell)}} \beta^{-\gamma^{(\ell+1)}}   \suit{\frac{1+\varepsilon}{1-\varepsilon}}\suit{\frac{1+\zeta}{1-\zeta}}^2 .\\
\end{aligned}
$$

By condition \ref{condition:sequence}, we have that, there exists $n_3 =n_3(\varepsilon)\ge 1$, such that for $n\ge n_3$,
$$
\suit{ (k^*/k)^{-\Delta^{(\ell)}}  |I_{\ell}|}^{\gamma^{(\ell)}} <\varepsilon.
$$
By choosing $\varepsilon$ sufficiently small, we have that, for $n\ge \max(n_1,n_2,n_3)$, 
$$
\frac{\max_{j \in I_{\ell}\backslash \tau_{\ell}  } \frac{U_j(n/(\beta k))}{U_j(n/k^*)}   }{\max_{j\in \tau_{\ell}} \frac{U_j(n/(k|I_{\ell}|))   }{U_j(n/k^*)   }}   <1,
$$
on the set $\mathcal{S}$. Consequently,  \eqref{s:prob2} holds.

{\bf\noindent Proof of  (ii)}.    Denote 
   $$
   \begin{aligned}
       \mathcal{S}_1= &\set{\abs{\frac{X_{n-k^*:n}^{(j)}}{U_j(n/k^*)} -1} \le \zeta, \quad j \in I_{\ell} }, \\
       \mathcal{S}_2= &\set{\abs{\frac{X_{n-k:n}^{(j)}}{U_j(n/k)} -1} \le \zeta, \quad j \in I_{\ell} }, \\
       \mathcal{S}_3= &\set{\abs{\frac{X_{n-\floor{\beta k}:n}^{(j)}}{U_j(n/ (\beta k))} -1} \le \zeta, \quad j \in I_{\ell} }, \\
       \mathcal{S}_4= &\set{\abs{\frac{X_{n-k|I_{\ell}|:n}^{(j)}}{U_j(n/(k|I_{\ell}|))} -1} \le \zeta, \quad j \in I_{\ell} }. \\
   \end{aligned}
   $$
   Then, $\mathcal{S} = \cap_{i=1}^4 S_i$. By Lemma \ref{lemma:concentration:order:statistic}, we have that,    for sufficiently large $n$, 
$$
\begin{aligned}
    \Pr(\mathcal{S}_1) =  \Pr\suit{\max_{j\in \mathcal{I}_{\ell}} \abs{\frac{X_{n-k^*:n}^{(j)}}{U_j(n/k^*)} -1} \le \zeta } 
    =& 1-\Pr\suit{\max_{j\in \mathcal{I}_{\ell}} \abs{\frac{X_{n-k^*:n}^{(j)}}{U_j(n/k^*)} -1}> \zeta } \\
    \ge & 1-|I_{\ell}|\max_{j\in \mathcal{I}_{\ell}} \Pr\suit{ \abs{\frac{X_{n-k^*:n}^{(j)}}{U_j(n/k^*)} -1}> \zeta }\\
    \ge & 1- 4|I_{\ell}| \exp\suit{-Ck^* \zeta^2} \\
    \ge  & 1- 4 p\exp\suit{-Ck^* \zeta^2},
\end{aligned}
$$
where 
$$
C = \frac{1}{16\max_{1\le j\le p}\gamma_j^2} .
$$
Similarly, we have that, for sufficiently large $n$,
   $$
   \begin{aligned} 
       \Pr(\mathcal{S}_2) \ge & 1-  4p\exp\suit{-Ck \zeta^2}, \\  
       \Pr(\mathcal{S}_3) \ge &1- 4p\exp\suit{-C\beta k \zeta^2}, \\  
       \Pr(\mathcal{S}_4) \ge &1- 4p\exp\suit{-Ck |I_{\ell}|  \zeta^2}.
   \end{aligned}
   $$
It follows that,   for sufficiently large $n$, 
   $$
   \begin{aligned}
    \Pr(\mathcal{S}) \ge 1- 4p\set{\exp\suit{-Ck^* \zeta^2}+\exp\suit{-Ck\zeta^2}+ \exp\suit{-C\beta k \zeta^2} + \exp\suit{-C k |I_{\ell}|  \zeta^2} } .
   \end{aligned}
   $$
   Note that, $k^*/k\to\infty$  as $n\to\infty$ and $\zeta$ is constant. Thus,  there exist $C_1>0, C_2>0$ such that, for sufficiently large $n$,  
   $$
   \Pr(\mathcal{S}) \ge 1-C_1p\exp\suit{-C_2 k}.
   $$
   The proof is complete.
   \end{proof}
   \begin{proof}[Proof of Theorem \ref{Theorem:prob:all}]
    Define $p_{\ell} = \Pr(\hattau_l = \tau_l |\hatI_l = I_l)$. Note that, 
    $$
     \begin{aligned}
        \Pr\suit{\widehat{\boldsymbol{\tau}} =\boldsymbol{\tau} }=& \Pr(\hattau_1 = \tau_1, \cdots, \hattau_g = \tau_g) \\
         =& \Pr(\hattau_1 = \tau_1|\hatI_1 = I_1) \Pr(\hattau_2 = \tau_2, \cdots, \hattau_g = \tau_g|\hattau_1 = \tau_1,\hatI_1 = I_1 ) \\ 
         =& p_{1} \Pr(\hattau_2 = \tau_2|\hatI_2 = I_2) \Pr(\hattau_3 = \tau_3, \cdots, \hattau_g = \tau_g|\hatI_3 = I_3) \\
         =& \cdots \\
         =& \prod_{\ell=1}^{g-1} p_{\ell}.
     \end{aligned}
    $$
 By Theorem \ref{theorem:prob}, we have that, for sufficiently large $n$, 
 $$
 \begin{aligned}
    \Pr\suit{\widehat{\boldsymbol{\tau}} =\boldsymbol{\tau} } \ge & \prod_{\ell=1}^{g-1} \set{1-C_1 p \exp(-C_2 k )} \\
     \ge & 1-g C_1 p \exp(-C_2 k ) \\
     \ge & 1-  C_1 pg \exp(-C_2 k ) .
 \end{aligned}
 $$
By using the condition $\log (pg)/k \to 0$ as $n\to\infty$,  we have that,
 $$
 C_1 pg \exp(-C_2 k )\to 0.
$$
and hence
$$
\Pr(\hattau = \tau ) \to 1.
$$

 \end{proof}

 \begin{proof}[Proof of Theorem \ref{Theorem:prob:all:unknown:g}]
    By Theorems \ref{theorem:prob} and \ref{Theorem:prob:all}, it remains to show that,   there exist constants $C_1>0, C_2>0$ such that, for sufficiently large $n$, 
   $$
    \Pr\suit{  \min_{j \in I_{g}}Y_{n-\floor{\beta k}:n}^{(j)}> u_{g}}  \ge 1-C_1p\exp(-C_2k).
   $$
Given that
   $u_{g} \le \max_{j \in I_g} Y_{n-k:n}^{(j)}$, 
   we only need to show that,   for sufficiently large $n$,
   $$
   \Pr\suit{  \min_{j \in I_{g}}Y_{n-\floor{\beta k}:n}^{(j)}> \max_{j \in I_g} Y_{n-k:n}^{(j)}}  \ge 1-C_1p\exp(-C_2k).
   $$
   The proof of this result is analogous to the proof of Theorem \ref{theorem:prob} and is thus omitted. 
 \end{proof}

\FloatBarrier
\bibliographystyle{apalike} 
\bibliography{mybib.bib}

\appendix

\setcounter{figure}{0}   
\renewcommand{\thefigure}{S\arabic{figure}}

\section{Some Lemmas}
\setcounter{equation}{0}   
\renewcommand{\theequation}{A\arabic{equation}}
\setcounter{lemma}{0}   
\renewcommand{\thelemma}{A\arabic{lemma}}
\setcounter{prop}{0}   
\renewcommand{\theprop}{A\arabic{prop}}

\begin{lemma}[Bernstein’s inequality, see \cite{shorack1986empirical}, page 855]\label{Lemma:Bernstein}
    Let $Z_1,\dots, Z_n$ be independent random variables with $|Z_i|\le M$ almost surely and $\bE(Z_i) = \mu$ for all $1\le i\le n$, where $M>0$, $\mu\in \mathbb{R}$. Then, for any $\varepsilon>0$,
$$
\Pr\suit{ \abs{\frac{1}{n} \sum_{i=1}^n Z_i -\mu}>\varepsilon} \le 2 \exp\set{-\frac{n\varepsilon^2}{2\frac{1}{n}\sum_{i=1}^n \textnormal{Var}(Z_i)+2M\varepsilon/3}}.
$$
\end{lemma}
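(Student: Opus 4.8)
Lemma \ref{Lemma:Bernstein} is the classical Bernstein inequality, and the plan is to reproduce the standard Cram\'er--Chernoff (exponential-moment) proof. The first step is a routine reduction: pass to the centered summands $W_i=Z_i-\mu$, so that $\bE W_i=0$, $\Var(W_i)=\Var(Z_i)$, and $|W_i|$ is bounded almost surely by a constant multiple of $M$; and observe that, applying the inequality also to $-Z_1,\dots,-Z_n$ and taking a union bound over the upper and lower tails, it suffices to prove the one-sided estimate $\Pr\bigl(\sum_{i=1}^n W_i>n\varepsilon\bigr)\le\exp\bigl\{-n\varepsilon^2/(2\bar v+2M\varepsilon/3)\bigr\}$ with $\bar v=\tfrac1n\sum_i\Var(Z_i)$, and then incur an extra factor of $2$.

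The heart of the argument is the per-summand exponential-moment bound: for every $\lambda>0$ with $\lambda M/3<1$,
$$\bE e^{\lambda W_i}\le\exp\suit{\frac{\lambda^2\Var(W_i)}{2\,(1-\lambda M/3)}}.$$
I would prove this by writing $e^{\lambda W_i}=1+\lambda W_i+\sum_{m\ge2}\lambda^m W_i^m/m!$, taking expectations so the linear term vanishes, bounding $\bE|W_i|^m\le\Var(W_i)\,M^{m-2}$ for $m\ge2$, using $m!\ge2\cdot 3^{m-2}$ to sum the resulting geometric series in $\lambda M/3$, and applying $1+x\le e^x$. By independence of the $W_i$ this yields, for any admissible $\lambda$,
$$\Pr\suit{\sum_{i=1}^n W_i>n\varepsilon}\le e^{-\lambda n\varepsilon}\prod_{i=1}^n\bE e^{\lambda W_i}\le\exp\suit{-\lambda n\varepsilon+\frac{\lambda^2 n\bar v}{2\,(1-\lambda M/3)}}.$$

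It then only remains to optimize the free parameter: choosing $\lambda=n\varepsilon/(n\bar v+n\varepsilon M/3)$ (which satisfies $\lambda M/3<1$) and simplifying gives the exponent $-(n\varepsilon)^2/\bigl(2n\bar v+2n\varepsilon M/3\bigr)=-n\varepsilon^2/(2\bar v+2M\varepsilon/3)$, the claimed one-sided bound; the two-sided statement follows from the reduction in the first step. The only place that needs genuine care is the exponential-moment estimate, where the numerical constants (in $\bE|W_i|^m\le\Var(W_i)M^{m-2}$ and $m!\ge2\cdot3^{m-2}$) must be tracked so that the term $2M\varepsilon/3$ emerges exactly as stated; everything else is bookkeeping, and in any case the result can simply be quoted from \citet{shorack1986empirical}.
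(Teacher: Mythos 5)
The paper itself does not prove this lemma---it simply quotes it from \citet{shorack1986empirical}---so the only question is whether your reconstruction is sound, and it is not quite. The Chernoff machinery is fine (the series bound, $m!\ge 2\cdot 3^{m-2}$, and the choice $\lambda=\varepsilon/(\bar v+M\varepsilon/3)$ all check out), but the gap sits in the step you dismiss as routine: under the stated hypothesis $\abs{Z_i}\le M$, centering only gives $\abs{W_i}=\abs{Z_i-\mu}\le 2M$, whereas your key moment bound $\bE\abs{W_i}^m\le \Var(Z_i)\,M^{m-2}$ requires $\abs{W_i}\le M$. Running your argument with the honest bound $2M$ yields the denominator $2\bar v+4M\varepsilon/3$, not the claimed $2\bar v+2M\varepsilon/3$; no amount of constant-tracking in the exponential-moment estimate will recover the stated constant from $\abs{Z_i}\le M$ alone.

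Indeed, this is not a repairable bookkeeping issue, because with only $\abs{Z_i}\le M$ the displayed inequality is false in general. Take $M=1$ and $Z_i\in\set{-1,1}$ i.i.d.\ with $\Pr(Z_i=1)=0.1$, so $\mu=-0.8$ and $\Var(Z_i)=0.36$, and let $\varepsilon$ be slightly below $0.6$. The event $\set{\abs{\bar Z_n-\mu}>\varepsilon}$ contains $\set{\#\set{i:Z_i=1}\ge 0.4n}$, whose probability is of order $n^{-1/2}\exp\set{-n\,\mathrm{KL}(0.4\,\|\,0.1)}$ with $\mathrm{KL}(0.4\,\|\,0.1)\approx 0.311$, while the claimed bound decays like $\exp(-0.32\,n)$; for $n$ large enough the bound is violated. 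The resolution is that in Shorack and Wellner the variables are centered, i.e.\ the hypothesis is effectively $\abs{Z_i-\mu}\le M$ (mean-zero $Z_i$ with $\abs{Z_i}\le M$); under that hypothesis your argument goes through verbatim and produces exactly the stated constant. You should state this hypothesis explicitly (or prove the weaker $2\bar v+4M\varepsilon/3$ version). None of this affects the paper downstream: in the proof of Lemma~\ref{lemma:concentration:order:statistic} the lemma is applied to indicator variables, for which $\abs{Z_i-\mu}\le 1=M$ holds automatically.
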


\begin{lemma}\label{lemma:concentration:order:statistic}
Assume that condition \ref{condition:uniform:convergence} holds. Let $k =k(n)$ be an intermediate sequence satisfying $k\to\infty, k/n\to 0$ as $n\to\infty$.   Let $\zeta$ be a constant satisfying $0<\zeta\le \delta/(2(1+\delta))$, where $\delta = \min_{1\le j\le p}\gamma_j$. 
 Then,   for sufficiently large $n$, 
    $$
    \Pr\suit{\abs{\frac{X_{n-k,n}^{(j)} }{U_j(n/k)}-1}>\zeta} \le 4 \exp\suit{ -\frac{k\zeta^2}{16\max_{1\le j\le p}\gamma_j^2} }, \quad j=1,\dots, p,
    $$
Here, $U_j(x)=F_j^{\leftarrow}(1-1/x)$, where $^\leftarrow$ denotes the left-continuous inverse function.
\end{lemma}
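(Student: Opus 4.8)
The plan is to reduce the symmetric deviation to two one‑sided events,
\[
\Pr\suit{\abs{\tfrac{X_{n-k,n}^{(j)}}{U_j(n/k)}-1}>\zeta}\;\le\;\Pr\suit{X_{n-k,n}^{(j)}>(1+\zeta)t_n}+\Pr\suit{X_{n-k,n}^{(j)}<(1-\zeta)t_n},\qquad t_n:=U_j(n/k),
\]
and to bound each term by $2\exp(-k\zeta^2/(16\max_{1\le l\le p}\gamma_l^2))$. First I would use the elementary order‑statistic identity $\set{X_{n-k,n}^{(j)}>x}=\bigl\{\sum_{i=1}^n\mI\suit{X_i^{(j)}>x}\ge k+1\bigr\}$ together with its analogue $\set{X_{n-k,n}^{(j)}<x}=\bigl\{\sum_{i=1}^n\mI\suit{X_i^{(j)}>x}\le k\bigr\}$ (using continuity of $F_j$ to ignore ties, and $1-F_j(t_n)=k/n$) to turn each probability into a tail of a sum of i.i.d.\ Bernoulli variables $Z_i=\mI\suit{X_i^{(j)}>x}$ with common mean $\mu=\mu(x)=1-F_j(x)$. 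The one crucial analytic input is that, with $x=(1\pm\zeta)t_n$ and $t_n\to\infty$, condition \ref{condition:uniform:convergence} — equivalently \eqref{s:uniform:U} — gives, \emph{uniformly in} $1\le j\le p$ and for $n$ large,
\[
\mu\bigl((1\pm\zeta)t_n\bigr)=\frac{k}{n}\,(1\pm\zeta)^{-1/\gamma_j}\,(1+o(1)).
\]

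Next I would apply Bernstein's inequality (Lemma \ref{Lemma:Bernstein}) with $M=1$ and $\tfrac1n\sum_i\Var(Z_i)\le\mu$. Writing $r_j:=(1+\zeta)^{-1/\gamma_j}<1$ and $s_j:=(1-\zeta)^{-1/\gamma_j}>1$: for the upper tail the relevant deviation is $\varepsilon=\tfrac kn-\mu=\tfrac kn(1-r_j)(1+o(1))>0$, and the Bernstein exponent $n\varepsilon^2/(2\mu+2\varepsilon/3)$ simplifies to $\tfrac{3k(1-r_j)^2}{2(2r_j+1)}(1+o(1))$; for the lower tail $\varepsilon=\mu-\tfrac kn=\tfrac kn(s_j-1)(1+o(1))$ and the exponent is $\tfrac{3k(s_j-1)^2}{2(4s_j-1)}(1+o(1))$. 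It then remains to bound these below, uniformly in $j$, by $k\zeta^2/(16\gamma_j^2)\ (\ge k\zeta^2/(16\max_l\gamma_l^2))$. For this I would invoke the elementary convexity bounds $1-e^{-u}\ge u/(1+u)$ and $e^{u}-1\ge u$ for $u\ge0$, together with $\ln(1+\zeta)\ge\tfrac34\zeta$ and $-\ln(1-\zeta)\ge\zeta$ on $\zeta\in(0,\tfrac12)$; combined with $\ln(1+\zeta)\le\zeta<\delta/2\le\gamma_j/2$ these yield $1-r_j\ge\zeta/(2\gamma_j)$ and $s_j-1\ge\zeta/\gamma_j$, and also the upper bound $s_j\le e$ — and it is precisely here that the hypothesis $\zeta\le\delta/(2(1+\delta))$ enters, via $-\ln(1-\zeta)\le2\zeta\le\delta/(1+\delta)\le\gamma_j$, which additionally keeps the Bernstein variance proxy of order $k/n$. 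Since $2r_j+1\le3$ and $4s_j-1\le4e-1<10$, the two exponents are $\ge\tfrac{k\zeta^2}{8\gamma_j^2}(1+o(1))$ and $\ge\tfrac{3k\zeta^2}{20\gamma_j^2}(1+o(1))$ respectively, both of which exceed $k\zeta^2/(16\gamma_j^2)$ for $n$ large; adding the two tail bounds then gives the claimed $4\exp(-k\zeta^2/(16\max_l\gamma_l^2))$, with room to spare in the constant.

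The step I expect to be the real work is not any single manipulation but the \emph{uniformity bookkeeping}: one must check that every occurrence of ``$o(1)$'' and every ``for $n$ sufficiently large'' coming from the regular‑variation approximation holds uniformly over the growing index set $\{1,\dots,p\}$. This is exactly what the uniform condition \ref{condition:uniform:convergence} provides, combined with the fact that $\max_j\gamma_j=\gamma^{(1)}$ is a fixed finite number — so that $1-r_j$ and $s_j-1$ are bounded away from zero uniformly (needed to absorb the multiplicative $o(1)$ into $\varepsilon$ without loss), and $s_j$ is bounded above uniformly (needed to keep the variance term controlled). Everything else — the order‑statistic identities, the single application of Bernstein's inequality, and the scalar inequalities above — is routine.
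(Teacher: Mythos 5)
Your proposal is correct and follows essentially the same route as the paper's proof: both reduce each one-sided deviation to a binomial tail via the order-statistic identity, apply Bernstein's inequality to the indicators, and use the uniform regular-variation condition \ref{condition:uniform:convergence} plus elementary scalar inequalities (exploiting $\zeta\le\delta/(2(1+\delta))$) to bound the exponent below by $k\zeta^2/(16\max_j\gamma_j^2)$ uniformly in $j$. The only differences are cosmetic bookkeeping — you track $(1\pm\zeta)^{-1/\gamma_j}$ with uniform $(1+o(1))$ factors where the paper bounds $\eta_n^{(j)}$ via explicit $\alpha_0$-errors — and you spell out the lower tail that the paper dismisses as "similar."
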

\begin{proof}[Proof of Lemma \ref{lemma:concentration:order:statistic}]

    Note that,  
$$
\Pr\suit{\abs{\frac{X^{(j)}_{n-k,n}  }{U_j(n/k)}-1}>\zeta} =  \Pr\suit{\frac{X^{(j)}_{n-k,n} }{U_j(n/k)}-1>\zeta}+   \Pr\suit{\frac{X^{(j)}_{n-k,n} }{U_j(n/k)}-1<-\zeta}.
$$

We first handle 
$$
\begin{aligned}
    \Pr\suit{\frac{X^{(j)}_{n-k,n} }{U_j(n/k)}-1>\zeta} =& \Pr\set{X^{(j)}_{n-k,n} >U_j(n/k) \suit{1+\zeta}  }\\
    =&\Pr\set{\sum_{i=1}^{n} \mI\suit{X_i^{(j)} > U_j(n/k) \suit{1+\zeta}} > k  } \\
    =& \Pr\set{\frac{1}{n}\sum_{i=1}^{n} \mI\suit{X_i^{(j)} > U_j(n/k) \suit{1+\zeta}}  > k/n  }. \\
\end{aligned}
$$
Denote 
$
Z_i^{(j)} = \mI\suit{X_i^{(j)} > U_j(n/k) \suit{1+\zeta}}.
$
Then, we have that, $|Z_i|\le 1$ and
$$
\begin{aligned}
    \bE (Z_i^{(j)}) = &\Pr\suit{X_i^{(j)} > U_j(n/k) \suit{1+\zeta}}=:\mu_n^{(j)},\\
    \textnormal{Var}(Z_i) =& \mu_n^{(j)}(1-\mu_n^{(j)}).
\end{aligned}
$$
By using Lemma \ref{Lemma:Bernstein}, we have that, 
$$
\begin{aligned}
    \Pr\suit{\frac{X^{(j)}_{n-k,n} }{U_j(n/k)}-1>\zeta} = & \Pr\suit{ \frac{1}{n}\sum_{i=1}^n Z_i - \mu_n^{(j)} > k/n-\mu_n^{(j)} } \\
    \le & 2\exp\suit{ -\frac{n(k/n-\mu_n^{(j)})^2}{2\mu_n^{(j)}(1-\mu_n^{(j)})+ 2(k/n-\mu_n^{(j)})/3} } \\
    =& 2 \exp\suit{ - k\frac{ k }{2n\mu_n^{(j)}} \frac{\suit{1-\frac{\mu_n^{(j)} n}{k}}^2}{1-\mu_n+ (k/(n\mu_n^{(j)}) -1)/3}  }\\
    =& 2 \exp\suit{ - k\frac{ 1 }{2\eta_n^{(j)}} \frac{\suit{1-\eta_n^{(j)}}^2}{1-\mu_n+ (1/\eta_n^{(j)} -1)/3}  },
\end{aligned}
$$
where $\eta_n^{(j)}:= \mu_n^{(j)} n/k$.

In the following, we give an upper bound and a lower bound of $\eta_n^{(j)}$.
Denote 
$$
\begin{aligned}
       \alpha(t)=&\max_{1\le j\le p}  \sup_{x>1} \abs{\frac{1-F_j(tx)}{1-F_j(t)}x^{1/\gamma_j}-1}, \\
       \alpha_0(t)=&\sup_{s\geq t}\alpha(s). 
\end{aligned}
$$
Then, we have that,
\begin{equation}\label{s:eq:uniform:F}
    \max_{1\le j\le p}  \sup_{x>1} \abs{\frac{1-F_j(tx)}{1-F_j(t)}x^{1/\gamma_j}-1}\le \alpha_0(t),
\end{equation}
and $\alpha_0(t)\to 0$, as $t\to\infty$, by using the condition
 \ref{condition:uniform:convergence}.

By  \eqref{s:eq:uniform:F}, we have that,  
$$
\abs{\frac{1-F_j(U_j(n/k)x)}{1-F_j(U_j(n/k))}x^{1/\gamma_j}-1 }\le  \alpha_0(U_j(n/k)),
$$
for all $x>1$ and $j=1,\dots,p$.  By using the condition \ref{condition:uniform:convergence}, we have that,  there exists $n_1 = n_1(\zeta, \max_{1\le j\le p}\gamma_j)$,  such that, for $n \ge n_1$, 
$$
\begin{aligned}
    \alpha_0(U_j(n/k))< \frac{1}{4}\frac{\zeta}{\gamma_j}<\frac{1}{8}, \quad j=1,\dots,p,
\end{aligned}
$$
by using the fact that 
$$
    \zeta< \frac{1}{2}\frac{\gamma_j}{1+\gamma_j}<\frac{1}{2}\gamma_j, \quad j=1,\dots,p.
$$
It follows  that,  for $n\ge n_1$,
\begin{equation}\label{s:eta:lower}
    \begin{aligned}
        \eta_n=  \frac{1-F_j\set{U_j(n/k)(1+\zeta)}}{1-F_j\set{U_j(n/k)}} 
        \ge & (1+\zeta)^{-1/\gamma_j}\set{1-\alpha_0(U_j(n/k))} \\
         \ge & (1-\zeta/\gamma_j)\set{1-\alpha_0(U_j(n/k))} \\
         \ge & \frac{7}{8} (1- \zeta/\gamma_j)\ge \frac{1}{4},
     \end{aligned}
\end{equation}
and  
\begin{equation}\label{s:eta:upper}
\begin{aligned}
\eta_n = \frac{1-F_j\set{U_j(n/k)(1+\zeta)}}{1-F_j\set{U_j(n/k)}} 
    \le &  (1+\zeta)^{-1/\gamma_j}  \set{1+\alpha_0(U_j(n/k))}\\
    \le & (1+\zeta)^{-1/\gamma_j} + \alpha_0(U_j(n/k)\\
    \le & 1-\frac{\zeta}{\gamma_j} +\frac{\zeta^2}{2\gamma_j} \suit{\frac{1}{\gamma_j}+1}+\alpha_0(U_j(n/k))\\
    \le & 1-\frac{3}{4}\frac{\zeta}{\gamma_j} +\alpha_0(U_j(n/k))\\
    \le &1-\frac{\zeta}{2\gamma_j}<1.
\end{aligned}
\end{equation}
Combining \eqref{s:eta:lower} and \eqref{s:eta:upper}, we obtain that,  for $n\ge n_1$,
$$
\frac{ 1 }{2\eta_n} \frac{\suit{1-\eta_n}^2}{1-\mu_n+ (1/\eta_n -1)/3}   \ge  \frac{1}{16\gamma_j^2}  \zeta^2,
$$
and hence
$$
\Pr\suit{\frac{X^{(j)}_{n-k,n} }{U_j(n/k)}-1>\zeta} \le  2 \exp\suit{  -\frac{k\zeta^2}{16\max_{1\le j\le p}\gamma_j^2} }.
$$

Similarly, we can show that,   for sufficiently large $n$,
$$
\Pr\suit{\frac{X^{(j)}_{n-k,n} }{U_j(n/k)}-1<-\zeta} \le 2 \exp\suit{  - \frac{k\zeta^2}{16\max_{1\le j\le p}\gamma_j^2} }.
$$
The proof is then complete.
\end{proof}

 \section{Additional Simulation}

 \subsection{Additional marginal distributions}
 In this subsection, we investigate the performance of the proposed clustering algorithm under models with
 Fr\'echet marginal distributions. Specifically, we consider the following two models:

{\noindent \bf Model (A-F)}.  The random variables $X^{(j)}, j=1,\dots, p$ are generated independently from  Fr\'echet distribution with shape parameter $1/\gamma_j$. The cumulative distribution function  of the  Fr\'echet distribution is given by 
 $$\text{Fr}_{1/\gamma_j}(x) = \exp(-x^{-1/\gamma_j}).$$

{\noindent \bf Model (B-F)}  We generate $(\widetilde{X}^{(1)},\dots, \widetilde{X}^{(p)})$ from a multivariate Cauchy distribution with scale matrix $\Sigma = (\sigma_{ij})_{i,j=1:p}$ and $\sigma_{ij} = 0.5^{|i-j|}$. Then we transform the marginal distribution of $\widetilde{X}_j$ to a  Fr\'echet distribution with shape parameter $1/\gamma_j$,  by 
$$
X^{(j)} = \text{Fr}_{1/\gamma_j}^{-1}\set{\text{St}_1(\widetilde{X}^{(j)})}.
$$

The simulation settings are consistent with those in Section \ref{sec:simulation}: sample size $n=2000$, number of clusters $g\in \set{3,4,5}$, $k = \lfloor 3\log^{1.05} (p) \rfloor$, $k^* = \lfloor n^{0.98}\rfloor$ and $\beta =  \min(2(k^*/k)^{-1}p+0.5,0.9)$.   
The average clustering accuracy is evaluated across varying values  of $q$ and $\Delta$, with the results illustrated in Figures  \ref{fig:q:fr} and \ref{fig:delta:fr}, respectively. The findings demonstrate that the proposed algorithm maintains strong performance under Fr\'echet marginal distributions, highlighting its robustness to different marginal distributional assumptions.

\begin{figure}[htb]
    \centering
    \includegraphics[width = 1\textwidth]{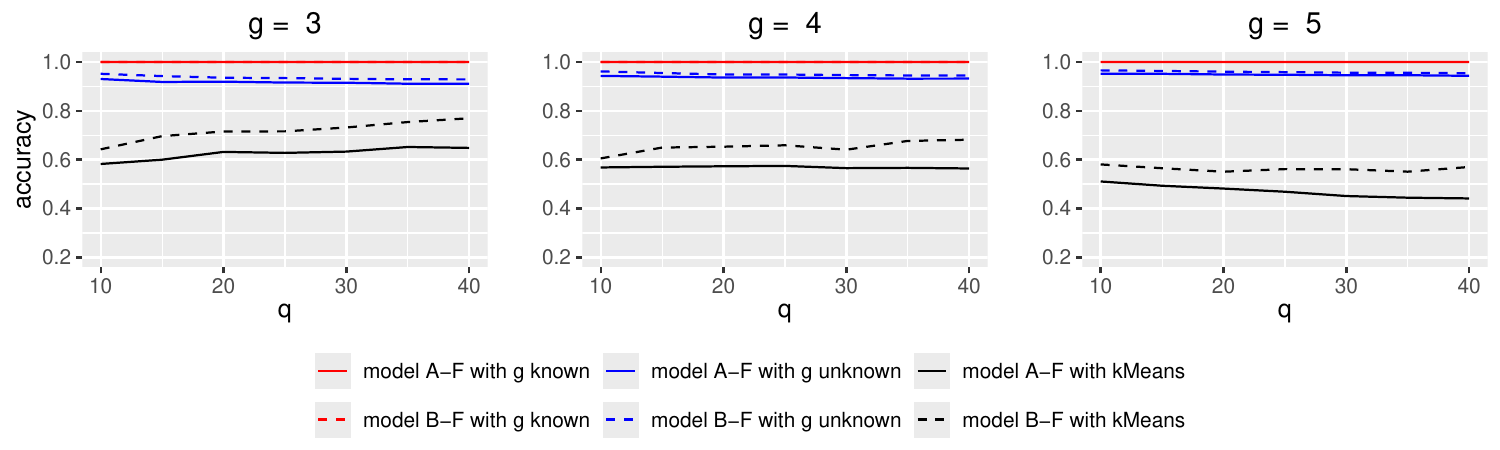}
    \caption{Clustering accuracy  against $q$ for   models (A-F) and (B-F). }
    \label{fig:q:fr}
\end{figure}

\begin{figure}[htb]
    \centering
    \includegraphics[width = 1\textwidth]{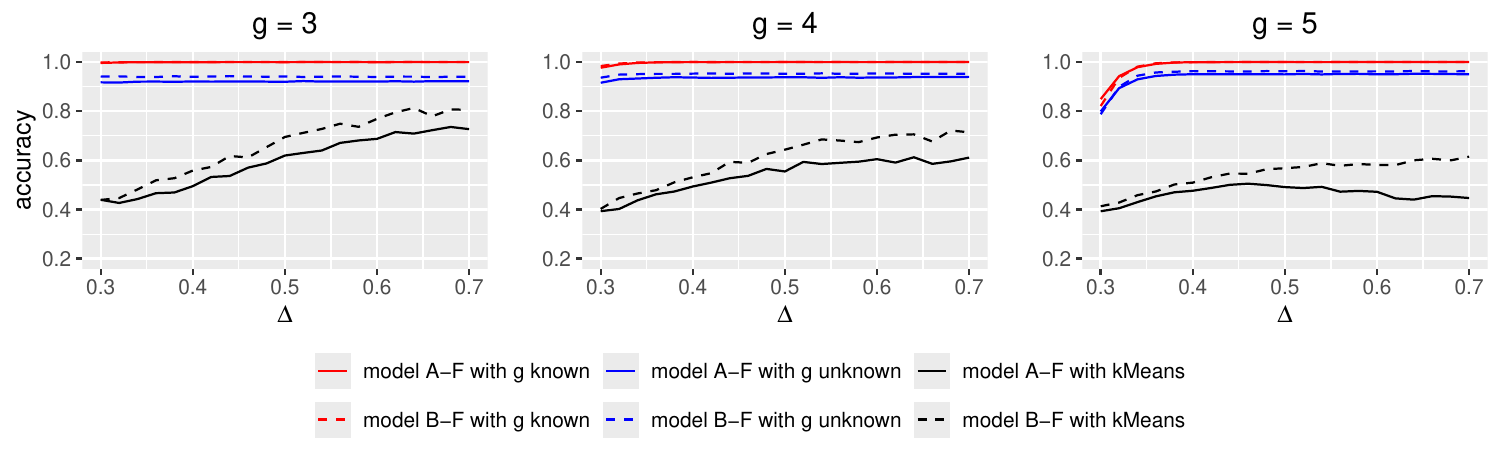}
    \caption{Clustering accuracy  against $\Delta$ for models (A-F) and (B-F). }
    \label{fig:delta:fr}
\end{figure}

 \subsection{Additional Dependence Structure}
 In this subsection, we examine the performance of the proposed clustering algorithm under alternative dependence structures. Specifically, we consider the following models:

 {\noindent \bf Model (C)}.  We generate $(\widetilde{X}^{(1)},\dots, \widetilde{X}^{(p)})$ from  a multivariate Cauchy distribution with scale matrix $\Sigma = (\sigma_{ij})_{i,j=1:p}$ and 
 $
 \sigma_{ij} = \mI(i=j) +0.5\mI(i\ne j)
 $. Then we transform the marginal distribution of $\widetilde{X}_j$ to an absolute Student-t distribution
 with degree of freedom $1/\gamma_j$,  by 
 $$
 X^{(j)} = \abs{\text{St}_{1/\gamma_j}^{-1}\set{\text{St}_1(\widetilde{X}^{(j)})}}.
 $$

{\noindent \bf Model (D)}  We generate $(\widetilde{X}^{(1)},\dots, \widetilde{X}^{(p)})$   from a multivariate Cauchy distribution with scale matrix $\Sigma = (\sigma_{ij})_{i,j=1:p}$ and $\sigma_{ij} = (-0.5)^{|i-j|}$. Then we transform the marginal distribution of $\widetilde{X}_j$ to an absolute Student-t distribution
with degree of freedom $1/\gamma_j$,  by 
$$
X^{(j)} = \abs{\text{St}_{1/\gamma_j}^{-1}\set{\text{St}_1(\widetilde{X}^{(j)})}}.
$$
The simulation settings remain unchanged: $n = 2000$, $ g\in \set{3,4,5}$,  $k = \lfloor 3\log^{1.05} (p) \rfloor$, $k^* = \lfloor n^{0.98}\rfloor$ and $\beta =  \min(2(k^*/k)^{-1}p+0.5,0.9)$. The average clustering accuracy is evaluated across varying values of $q$ and $\Delta$, with the results presented in Figures \ref{fig:q:CD} and \ref{fig:delta:CD}, respectively. The results indicate that the proposed algorithm is robust to different dependence structures.

\begin{figure}[htb]
    \centering
    \includegraphics[width = 1\textwidth]{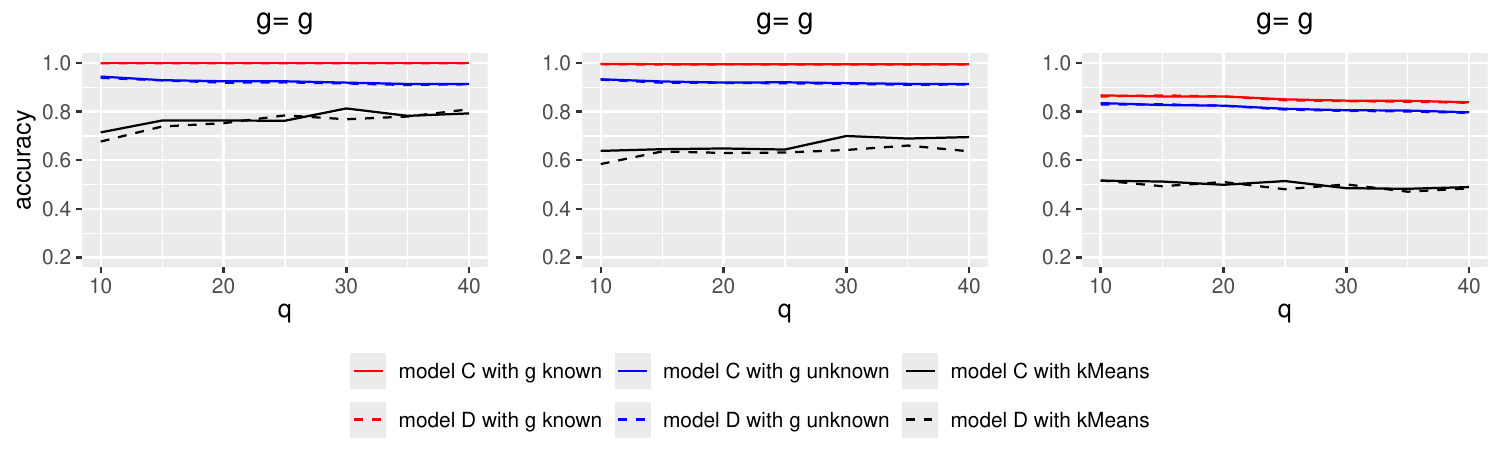}
    \caption{Clustering accuracy  against $q$ for models (C) and (D). }
    \label{fig:q:CD}
\end{figure}

\begin{figure}[htb]
    \centering
    \includegraphics[width = 1\textwidth]{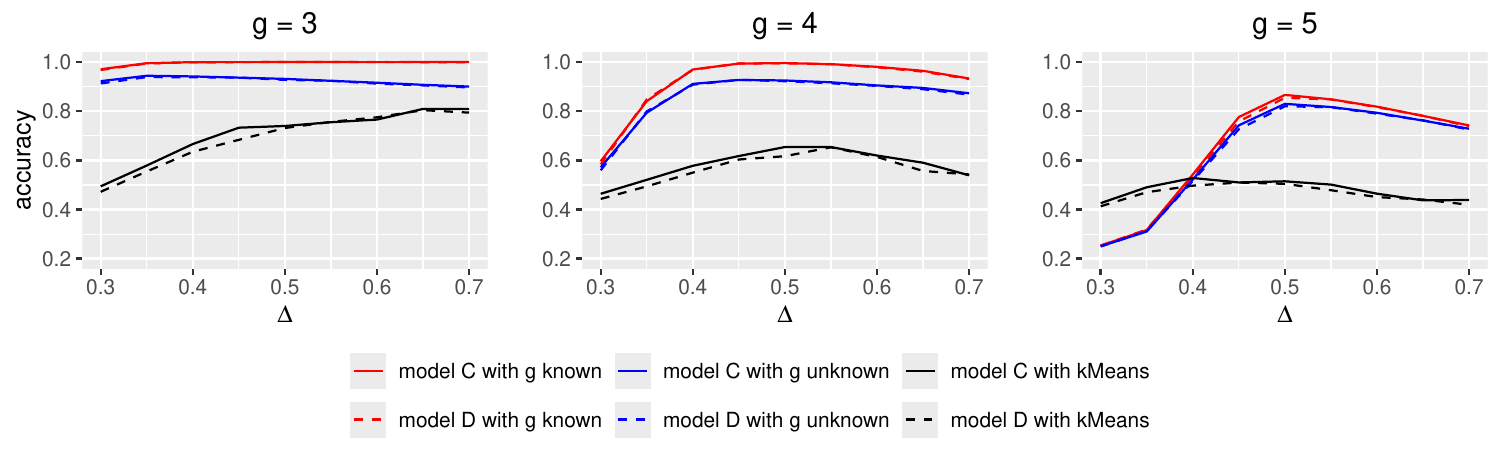}
    \caption{Clustering accuracy  against $\Delta$ for models (C) and (D). }
    \label{fig:delta:CD}
\end{figure}

\end{document}